\theoremstyle{plain}
\newtheorem{theorem}{Theorem}
\newtheorem{corollary}[theorem]{Corollary}
\newtheorem{lemma}[theorem]{Lemma}
\newtheorem{observation}[theorem]{Observation}
\newtheorem{Fact}[theorem]{Fact}
\newtheorem{definition}[theorem]{Definition}
\newenvironment{customthm}[1]
  {\innercustomthm}
  {\endinnercustomthm}
\newenvironment{customfac}[1]
  {\innercustomfac}
  {\endinnercustomfac}
\theoremstyle{definition}
\newtheorem{convention}[theorem]{Convention}
\newtheorem{example}[theorem]{Example}
\theoremstyle{remark}
\definecolor{szary}{RGB}{225, 225, 225}
\newcommand{\function}[1]{\mathit{#1}}
\newcommand{\ddd}{\mathbbmss{d}}
\newcommand{\nnn}{\mathbbmss{n}}
\newcommand{\mmm}{\mathbbmss{m}}
\newcommand{\ccc}{\mathbbmss{c}}
\newcommand{\kkk}{\mathbbmss{k}}
\newcommand{\jjj}{\mathbbmss{j}}
\newcommand{\uuu}{\mathbbmss{u}}
\newcommand{\rrr}{\mathbbmss{r}}
\newcommand{\ttt}{\mathbbmss{t}}
\newcommand{\cccc}{  {\small \cent}}
\newcommand{\MMM}{  {\mathcal M }}
\newcommand{\dbcontained}{\leq_{\forall}}
\newcommand{\dbcontainednt}{\leq_{\forall}^{nt}}
\newcommand{\cqize}{\mathtt{cq}}
\newcommand{\ucqize}{\mathtt{ucq}}
\renewcommand{\ucqize}[1]{\raisebox{-0.2em}{\scalebox{0.2}{\includegraphics{}}}_{#1}}
\renewcommand{\ucqize}[1]{\mathfrak{rep}(#1)}
\renewcommand{\mu}[1]{\mathfrak{rep}_{#1}}
\newcommand{\two}{\;\text{\footnotesize\ding{203}}\;}  
\newcommand{\DDD}{{\mathbb D}}
\newcommand{\myproof}{\noindent{\sc Proof:\hspace{1mm}}}
\newcommand{\vertex}{{\mathcal V}}
\newcommand{\cansee}{\hspace{0.7mm}\text{\ding{251}}\hspace{0.7mm}}
\newcommand{\planetize}{\text{\Thorn }}
\newcommand{\trip}[2]{\mathbb T^{#2}_{#1}}
\newcommand{\bagcontained}{\leq_{\scriptsize\forall }}
\newcommand{\coef}{\mathtt{Coef}}
\newcommand{\gammas}{\ucqize{P_s}}
\newcommand{\gammab}{\ucqize{P_b}}
\newcommand{\set}[1]{\{\,#1\,\}}
\newcommand{\singleton}[1]{\{#1\}}
\newcommand{\pair}[1]{\langle\, #1 \,\rangle}
\newcounter{rscounter}
\newcounter{dbcounter}
\newcommand{\iffi}{\textit{iff} }
\newcommand{\var}[1]{\function{var}(#1)}
\newcommand{\corange}[1]{{\color{black} #1}}
\definecolor{mint}{rgb}{0, 0.8, 0.7}
\definecolor{lilac}{rgb}{0.8, 0.4, 0.7}
\definecolor{fox}{rgb}{0.9, 0.5, 0.3}
\newcommand{\piotr}[1]{{\small \textbf{P:} \color{mint}#1}}
\newcommand{\xxi}{\mathbbmss{x} }
\newcommand\wildcard[1][.60em]{%
  \makebox[#1]{%
    \kern.07em
    \vrule height.3ex
    \hrulefill
    \vrule height.3ex
    \kern.07em
  }
}
\newcommand{\nawn}[1]{\pair{#1}} 
\newcommand{\nawni}[1]{\langle #1 \rangle} 
\newcommand{\bignawn}[1]{\big\langle\, #1 \,\big\rangle} 
\newcommand{\subvenus}{\scalebox{0.75}{\venus}}
\newcommand{\Atoms}{{\footnotesize \venus}\text{-}\mathtt{Atoms}}
\newcommand{\good}{\mathtt{Good}}
\newcommand{\Planet}{\mathtt{Planet}}
\newcommand{\galaxy}{\mathtt{RClique}}
\DeclareMathOperator{\Planets}{Planets}
\newcommand{\Planetsbezv}{\Planets^{\not\subvenus}}
\newcommand{\Planetsbezvind}{\Planetsbezv}
\def\alternatesymbols{1}
\renewcommand{\cqize}{\mathfrak{cq}}
\renewcommand{\planetize}{\mathfrak{mr}}
\renewcommand{\cansee}{\twoheadrightarrow}
\newcommand{\myitem}[1]{%
\item[#1]\protected@edef\@currentlabel{#1}%
}
\newcommand{\commed}[1]{{\color{orange}#1}}
\begin{document}
\title{Bag Semantics Query Containment:\\ The CQ vs. UCQ Case and Other Stories\footnote{This research  was  supported by grant
  2022/45/B/ST6/00457 from the Polish National Science Centre (NCN).}}
\author{Jerzy Marcinkowski, Piotr Ostropolski-Nalewaja\\ {\normalsize University of Wrocław}}
\date{}

\maketitle

 \begin{abstract}
Query Containment Problem (QCP) is a
fundamental  decision problem in  query processing and optimization. 

While QCP has for a long time been completely understood for the case of set semantics,  decidability of  
QCP for conjunctive queries under multi-set
 semantics ($QCP_{\text{\tiny CQ}}^{\text{\tiny bag}}$) 
  remains 
   one of the most intriguing open problems in database theory.
Certain effort has been put, in last 30 years, to solve this 
problem and
some decidable special cases  of $QCP_{\text{\tiny CQ}}^{\text{\tiny bag}}$ were identified, as well as some
undecidable extensions,
including $QCP_{\text{\tiny UCQ}}^{\text{\tiny bag}}$.

 In this paper we introduce a new technique which produces, for a given UCQ $\Phi$, a CQ $\phi$
such that the application of $\phi$ to a database $D$ is, in some sense,
an approximation of  the application of $\Phi$ to $D$. 
Using this technique we 
could analyze the status of $QCP^{\text{\tiny bag}}$ when one 
of the queries in question is a CQ and the other is a UCQ,
and we reached conclusions which surprised us a little bit. 
We also tried to use this technique to translate the known undecidability proof for 
 $QCP_{\text{\tiny UCQ}}^{\text{\tiny bag}}$ 
 into a proof of undecidability of  $QCP_{\text{\tiny CQ}}^{\text{\tiny bag}}$. And,
 as you are going to see,
we got stopped just one infinitely small $\varepsilon$ before reaching this ultimate goal.
\end{abstract}

\section{Introduction. Part 1: the general context.}\label{sec:intro}


Query Containment Problem (QCP) is one of the most  fundamental  decision problems in database query processing and optimization. 
It is formulated as follows:

\begin{center}
\begin{minipage}{0.7\linewidth}
    
\noindent
{\em The instance of QCP} are two database queries, $\Psi_s$ and $\Psi_b$.

\noindent
{\em The question} is whether 
$\Psi_s(D)\subseteq \Psi_b(D)$ holds for each database $D$.
\end{minipage}
\end{center}

In this introduction, by $\Psi(D)$  we denote
the result of applying query $\Psi$ to the database $D$.
Following \cite{MO24} we use the  subscripts $s$ and $b$  to mean
``small'' and ``big'' respectively: the QCP asks if the answer to the ``small'' query is always contained in the answer to the ``big'' one\footnote{We will also sometimes refer to $\Psi_s(D)$ as ``the s-query'' and to $\Psi_b(D)$ as ``the b-query''.}.

As most of the decision problems theoretical computer science considers,  QCP comes in many variants, depending on  parameters. In the case of QCP  usually two parameters  are considered: one is the class of queries we allow, and the second is  the
precise semantics of  $\Psi(D)$ (and -- in consequence --  the precise semantics of the symbol $\subseteq$). 
The classes of queries which have been considered in this context include 
$CQ $ (conjunctive queries), or $UCQ$ (unions of conjunctive queries) or $CQ_\neq $ (conjunctive queries 
with inequalities), or some subsets of $CQ$. The possible semantics of $\Psi(D)$ are two: either we can see $\Psi(D)$ as a relation, 
that is a {\bf set} of tuples, or as a multirelation, 
that is a {\bf multiset} also known as  a {\bf bag} of tuples. In the first case, the $\subseteq$ in the above statement
of QCP is understood to be the
set inclusion, in the second case it is the multiset inclusion.
We use natural notations to call the variants, for example $QCP^{\text{\tiny bag}}_{\text{\tiny CQ}_\neq}$ is
QCP for conjunctive queries with inequality, under bag semantics and $QCP^{\text{\tiny set}}_{\text{\tiny UCQ}}$ is $QCP$ for unions of CQs  under set semantics.

 $QCP^{\text{\tiny set}}$ has long been well understood.
It was noticed  already in 1977  that $QCP^{\text{\tiny set}}_{\text{\tiny CQ}}$ is NP-complete \cite{CM77}.
For richer query languages, $\Pi^{\text{\tiny P}}_2$-completess\footnote{
  $\Pi^\text{{\tiny P}}_2$ is the second level of
 the polynomial hierarchy, so that NP is  $\Sigma^\text{{\tiny P}}_1$ in this notation.}
was shown in  \cite{SY80} for $QCP^{\text{\tiny set}}_{\text{\tiny UCQ}}$.
 Then, in \cite{K88},  it was proven that
 $QCP^{\text{\tiny set}}_{\text{\tiny CQ},\neq,\leq}$   is also in  $\Pi^{\text{\tiny P}}_2$. Finally, in
 \cite{M97} a $\Pi^{\text{\tiny P}}_2$
 lower bound was established for this class.

But an argument can be made that
in real database systems, 
where duplicate tuples are not eliminated,
queries are usually evaluated under bag semantics, not set semantics.

Unfortunately, as it was realized in the early 1990s,
no  techniques developed for the analysis of  $QCP^{\text{\tiny set}}$ survive in the context of  $QCP^{\text{\tiny bag}}$.
In the classical paper
 \cite{CV93} the authors  observe that
the proof of the NP upper bound  for
$QCP^{\text{\tiny set}}_{\text{\tiny CQ}}$, from \cite{CM77},
 does not
survive in the bag-semantics world, and claim a $\Pi^\text{{\tiny P}}_2$ lower bound for  $QCP^{\text{\tiny bag}}_{\text{\tiny CQ}}$,
deferring the proof  to the full version of the paper.
The same observation was made also in an earlier paper \cite{CW91}, which seems to be  less well known than \cite{CV93}
Let us quote \cite{CW91} here: {\em  (...) there is almost no
theory on the properties of queries and programs that retain duplicates. The
development of such a theory is part of our future plans.}

But neither such theory was ever developed, nor did the full version of \cite{CV93} ever appear. And
no non-trivial bounds for the complexity of $QCP^{\text{\tiny bag}}_{\text{\tiny CQ}}$ were proven so far,
so not only nothing is known about the complexity of this problem but even its decidability has remained an open question for over 30 years --- despite many attempts.

Most of the effort to attack the problem was so far concentrated on the {\bf positive side}, and many results for restricted variants of $QCP^{\text{\tiny set}}_{\text{\tiny CQ}}$ were produced, which we are unable to survey here. Let us just mention that 
 decidability of $QCP^{\text{\tiny bag}}$ was shown for
 projection-free conjunctive queries  \cite{ADG10}.
This result was later extended to
the case where
$\Psi_s$ is a projection-free CQ and $\Psi_b$ is an arbitrary CQ \cite{KM19}.
The proof is by a reduction to a known decidable class of Diophantine inequalities.

Another line of attack, on the positive side, which originated from \cite{KoR11}, features the information-theoretic notion of entropy.
This technique is really beautiful, but
unfortunately the paper \cite{AKNS20}  exhibits its limitations, showing that
decidability of $QCP^{\text{\tiny bag}}_{\text{\tiny CQ}}$, even if restricted to the case where  $\Psi_b$ is an acyclic CQ,
is already  to certain
 long standing open problem in information theory.

But it is  the {\bf negative side} that is more interesting from the point of view of our paper.

\vspace{-1mm}
\section{Preliminaries. Part 1:  structures, queries and query containment.}\label{sec:prelims}

Before we continue with the introduction, it will be convenient to introduce some notations. 

We use the term  {\em structure}  to denote 
a finite relational structure over some relational schema (signature). Apart from relations we allow
 for constants (see Section \ref{o-stalych}) in the signature. We  use the letter $D$ (or $\DDD$) to denote
 structures. Notice that our input structures are {\bf not} multi-structures, 
 which means that the semantics we consider is bag-set semantics, not bag-bag semantics (see  \cite{AKNS20}). It is well known \cite{JKV06} that the computational complexity of $QCP^{\text{\tiny bag}}_{\text{\tiny CQ}}$ is the same for bag-bag semantics 
 and bag-set semantics. Also, all the results and reasonings of this paper 
 in principle remain true for bag-bag semantics (but the notational complexity increases).

 If $D$ is a structure then by $\vertex(D)$ we mean the set of vertices of $D$ (i.e. the active domain of $D$).
 For a structure $D$ and set $A\subseteq \vertex(D)$ by 
 $D\restriction_{A}$ we mean the substructure of $D$ induced by $A$, that is the structure whose set of vertices is $A$ and whose atoms are all the atoms of $D$ which only mention vertices from $A$. Following \cite{MO24}, we call a structure $D$ {\bf non-trivial} if  two constants, $\mars$ (Mars) and $\venus$ (Venus) are in the signature of $D$ and if
 $D\models\; \mars\neq \venus$.

 When we say {\em ``query''} we always mean a conjunctive query  (CQ) or a union of conjunctive queries (UCQ). We use lower case Greek letters for CQs, and upper case Greek letters for UCQs and for queries in general. 
{\bf All queries} in this paper  {\bf are Boolean}. We never explicitly
write
the existential quantifiers in front of queries
and assume all the variables to be
existentially quantified. 
If $\Phi$ is a query then $var(\Phi)$ is the set consisting of all the variables which appear in $\Phi$.
For a CQ $\phi$, by $\mathrm{canonical\_structure}(\phi)$ we mean the structure $D$ whose vertices are the variables and 
constants occurring in $\phi$ and such that for an atomic formula $A$, there is $D\models A$ if and only if $A$ is one 
of the atoms of $\phi$. If $\phi$ is a query, and $D$ is a structure, then by $Hom(\phi,D)$ we denote the set of all the 
functions $h: var(\phi)\rightarrow \vertex(D)$ which are homomorphisms from  $\mathrm{canonical\_structure}(\phi)$ to $D$. 

The {\em raison d'être} of a query is to be applied to a structure. 
For a query $\Phi$ and a structure $D$ it is standard to use the notation $\Phi(D)$ for the result of 
such application,
But our queries are complicated formulas, and our structures are usually defined by complicated expressions. We tried to 
use this standard notation but the result was unreadable, even for the authors.
 We felt we needed
an infix notation for query application, clearly separating the query from the structure. 
And we decided on $\Phi \two D$ (where $\two$ is -- obviously -- short for 
``applied two'').
If  $\phi$ is a  Boolean CQ, one would   think that 
$\phi \two D$
 can only be {\sc YES} or {\sc NO}, but since we 
consider the multiset (bag) semantics in this paper, this {\sc YES} can be repeated any  natural number of times,
depending on the number of ways $\phi$ can be satisfied in $D$. This is  formalized as: \vspace{-6mm}
$$\;\;\;\;\;\;\;\; \phi \two D \stackrel{\text{\tiny df}}{=} |Hom(\phi, D)|$$

\vspace{-1.5mm}
If  $\Phi = \bigvee_{j=1}^{\jjj}\; \phi_j$ is a UCQ, then 
$\Phi \two D  $ is defined\footnote{See \cite{HK17} for a discussion regarding the semantics of multiset union in the database theory context. } as  $ \;\Sigma_{j=1}^{\jjj} 
\nawn{\phi_j \two D}$.
We often use angle brackets, as in the last formula, to indicate that the bracketed term is a natural number. We hope this convention will slightly mitigate the pain of parsing our complicated formulas.

Recall $\Psi\two D$ is always a natural number. For a query $\Psi$ and a rational number $\rrr$ we sometimes find it convenient to write $\rrr\cdot\Psi$, to denote a new ``query'' such that $(\rrr\cdot\Psi)\two D $
equals 
$\rrr\cdot \nawn{ \Psi \two D } $. 

For queries $\Psi_s$ and $\Psi_b$ we say that query $\Psi_s$ is {\bf contained} in query $\Psi_b$
if $\Psi_s \two D \leq \Psi_b \two D $ holds for each structure $D$. We denote it as 
$\Psi_s \dbcontained \Psi_b$. We write $\Psi_s \dbcontainednt \Psi_b$ if
 $\Psi_s \two D \leq \Psi_b \two D $ holds for each non-trivial $D$. 
 $QCP^{\text{\tiny bag}}_{\text{\tiny CQ}}$ can be formulated in this language as 
 the problem whose instance are CQs $\phi_s$ and $\phi_b$ and the question is whether $\phi_s \dbcontained\phi_b$.

\vspace{-0.5mm}
  \section{Introduction. Part 2: previous works, our contribution and the structure of the paper.} 

There are,  up to our knowledge, only 3 papers so far, where negative results were shown 
for natural extensions of $QCP^{\text{\tiny bag}}_{\text{\tiny CQ}}$. 
First,  \cite{IR95}
proved that  $QCP^{\text{\tiny bag}}_{\text{\tiny UCQ}}$ is undecidable.  Then, in 2006, \cite{JKV06} proved  undecidability of $QCP^{\text{\tiny bag}}_{\text{\tiny CQ}_\neq}$. The argument here is much more complicated than the one in \cite{IR95}  and, while ``real'' conjunctive queries are mentioned in the title of \cite{JKV06},
the queries needed for the proof of this negative result require no less than $59^{10}$ inequalities.
In a recent paper \cite{MO24} it is shown that $QCP^{\text{\tiny bag}}_{\text{\tiny CQ},\neq}$ remains undecidable 
even if both $\psi_s$ and $\psi_b$ are conjunctive queries with exactly one inequality each.
It is also noticed in \cite{MO24} that the problem for  $\psi_s$ being a CQ$_\neq$ and $\psi_b$ being CQ is trivial,
and it is proven that the problem for  $\psi_s$ being a CQ and $\psi_b$ being a CQ$_\neq$ is as decidable as
$QCP^{\text{\tiny bag}}_{\text{\tiny CQ}}$ itself. Finally, \cite{MO24} show that the 
problem whose instance are two Boolean CQs, $\psi_s$ and $\psi_b$, and a natural number $c$,
and the question is whether $(c\cdot \Psi_s)\dbcontainednt \Psi_b$,
is undecidable.

All the aforementioned negative results
  (\cite{IR95}, \cite{JKV06}, \cite{MO24}) use Hilbert's 10th Problem as the source of undecidability:
  the database provides a valuation of the numerical variables, and the universal quantification from the Hilbert's Problem is 
  simulated by the universal quantification over databases. The challenge is
  how to encode the evaluation of a given polynomial using the available syntax.

  This is very simple if, like in \cite{IR95}, we deal with UCQs: a monomial in a natural way 
  can be represented as a CQ and hence a polynomial in a natural way 
  can be represented as a UCQ.
  
  In \cite{JKV06} a complicated construction was designed
  to encode an entire polynomial as one CQ. But this construction only
  works correctly for some  special databases. So this
construction is supplemented in \cite{JKV06} with a heavy anti-cheating mechanism, 
which guarantees that if $D$ is not ``special'', then
 $\psi_b \two D$ is  big enough to be greater than $\psi_s \two D$. This is done by including, in
 $\psi_b$, a sub-query which returns 1 when applied to good databases and, when applied to databases which are not good, returns
 numbers higher than anything $\psi_s$ can possibly earn thanks to ``cheating''. And
 it is this anti-cheating mechanism in  \cite{JKV06} that
  requires such huge number of inequalities. 
  
  The main idea of \cite{MO24} is a new polynomial-encoding construction, which is different from the one in 
  \cite{JKV06}, but the general philosophy is the same, and it also only works for ``correct databases''.  It is however different enough
  not to require any inequalities in the anti-cheating part, using a multiplicative constant $c $
  instead (this $c$ depends on the instance of Hilbert's 10th Problem, so it must be a part of the input and is typically huge). Then it is shown that multiplication by $c$ can be simulated by a single inequality in $\psi_b$ (while an inequality in $\psi_s$ is needed
  to enforce non-triviality).

 
In this paper we propose a new technique, significantly different than the ones from
 \cite{JKV06}, \cite{MO24},
 which we call CQ-ization. It is a generic technique, not specifically designed to 
 encode polynomials.
 
CQ-ization, for a UCQ $\Psi$, produces
 a conjunctive query $\cqize(\Psi)$ such that, for any $D$, the result of applying 
 $\cqize(\Psi)$ to $D$
 depends, in 
 a somehow predictable way, on the result  of applying $\Psi$ to $D$. 
 
 This allows us (at least to some degree)  to translate the old negative result from \cite{IR95}, for 
 $QCP^{\text{\tiny bag}}_{\text{\tiny UCQ}}$, to the realm of CQs.
 No complicated anti-cheating mechanism is needed here
 .  Using this technique we were able to easily reproduce all the results from \cite{JKV06}, \cite{MO24}
 and to prove new ones, in particular Theorem \ref{th:cq-cq} which may, more or less rightly, give the impression that the main goal in this field, that is determining  (in negative)
 the decidability status of $QCP^{\text{\tiny bag}}_{\text{\tiny CQ}}$ could  be 
 not completely out of reach.
 
The basic concepts of our technique are presented in Sections \ref{sec:running-example} and \ref{sec:cqizacja}. Then,
in Section \ref{sec:tak-samo-trudne}  the first of our two main result comes. 
  We show (and we think it is quite a surprising observation) that the possibility of having an UCQ (instead of a CQ) 
 as $\Psi_b$ does not  make the problem harder:

 \begin{theorem}\label{th:tak-samo-trudne}
      The following two claims are equivalent:\hfill --$\;\;\;QCP^{\text{\tiny bag}}_{\text{\tiny CQ}}$ is decidable.\\
      {\color{white}.}\hspace{25mm}--$\;\;\;QCP^{\text{\tiny bag}}$, restricted to instances where $\psi_s$ is a CQ and $\Psi_b$ is a UCQ, is decidable.

 \end{theorem}

\noindent
 Notice that this is not a negative result. We do not encode anything here. We just 
 CQ-ize (almost) any UCQ $\Psi_b$ and prove that nothing bad can happen.

Then we concentrate on negative results. We want (variants of) Hilbert's 10th Problem to be our source of undecidability. Therefore,
in Section \ref{sec:wielomiany},
we  explain how polynomials are represented as UCQs. This is {\bf the} natural representation,
the same as in \cite{IR95}, but using more formalized language.

In view of Theorem \ref{th:tak-samo-trudne}
one can wonder what
is the situation
if we restrict
 $QCP^{\text{\tiny bag}}$  to instances 
where 
the \corange{$s$-query $\Phi_s$} is a UCQ and 
the \corange{$b$-query $\phi_b$} is a CQ. One needs to be a bit careful here.
Imagine $D$ is the ``well of positivity", that is a database whose active domain is a 
single constant $c$, with  $D\models P(\bar c)$ for each relation $P$ in the schema 
($\bar c$ is here a tuple in which $c$ is repeated arity($P$) times). 
Then $\corange{\phi_b}(D)=1$ while
$\corange{\Phi_s}(D)=$ 
{\em the number of disjuncts in } 
$\corange{\Phi_s}$. So, obviously, containment never holds, for trivial reasons. Similar observations led the authors of  \cite{MO24} to the notion of 
{\bf non-trivial databases}, 
It follows from the main result in 
 \cite{MO24} that\footnote{This observation, however, escaped the attention of the authors of \cite{MO24}.}
if trivial counterexamples are ruled out, query containment for the case under consideration becomes undecidable:

\begin{theorem}\label{th:UCQ-duzo}
  {\em The following problem is undecidable:}\\
  Given are Boolean UCQ  $ \corange{\Phi_s} $ and Boolean CQ $\corange{\phi_b} $.
  Does $\corange{\Phi_s}(D) \dbcontainednt \corange{\phi_b}(D)$? 
\end{theorem}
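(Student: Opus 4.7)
My plan is to reduce directly from the undecidability result of \cite{MO24} cited earlier in this section: the problem whose instance consists of two Boolean CQs $\Psi_s, \Psi_b$ together with a natural number $c$, and whose question is whether $(c\cdot\Psi_s) \dbcontainednt \Psi_b$, is undecidable. The target problem is syntactically only very slightly more expressive on the $s$-side, so I expect the reduction to be immediate once the right observation is made.

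Given such an instance $(\Psi_s, \Psi_b, c)$, I would output the pair $(\Phi_s, \phi_b)$ defined by $\phi_b := \Psi_b$ and
$\Phi_s := \Psi_s^{(1)} \vee \Psi_s^{(2)} \vee \cdots \vee \Psi_s^{(c)}$,
where each $\Psi_s^{(i)}$ is a syntactic copy of $\Psi_s$ (one may take disjoint variable sets, though this is immaterial since each disjunct is implicitly existentially closed). By the semantics of UCQ evaluation recalled in Section \ref{sec:prelims}, for every structure $D$ we would have
\[
\Phi_s \two D \;=\; \sum_{i=1}^{c} \nawn{\Psi_s^{(i)} \two D} \;=\; c\cdot \nawn{\Psi_s \two D} \;=\; (c\cdot \Psi_s) \two D,
\]
while $\phi_b \two D = \Psi_b \two D$. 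Thus $\Phi_s \two D \le \phi_b \two D$ holds for a given non-trivial $D$ precisely when $(c\cdot \Psi_s) \two D \le \Psi_b \two D$ does, so $\Phi_s \dbcontainednt \phi_b$ if and only if $(c\cdot \Psi_s) \dbcontainednt \Psi_b$, which completes the reduction.

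Honestly, I do not anticipate any hard step: the whole content of the argument is the single observation that the multiplicative constant $c$ from \cite{MO24} can be absorbed into the query syntax, simply by listing $c$ copies of $\Psi_s$ as disjuncts of a UCQ. The one point one could worry about is that the reduction is not polynomial when $c$ is given in binary --- the produced $\Phi_s$ has size linear in $c$ --- but since the target statement concerns only undecidability, any computable reduction will do. This is presumably why the observation escaped the authors of \cite{MO24}, whose focus was on minimizing the number of inequalities rather than on exploring the CQ vs.\ UCQ axis of the landscape that the present paper is devoted to.
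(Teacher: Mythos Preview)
Your proof is correct, and indeed the paper itself explicitly acknowledges this route: just before stating Theorem~\ref{th:UCQ-duzo} it says the result ``follows from the main result in \cite{MO24}'' (with a footnote noting that this observation escaped those authors). Your reduction --- replacing the multiplicative constant $c$ by a $c$-fold disjunction of copies of $\Psi_s$ --- is exactly that derivation.

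The paper's own proof in Section~\ref{sec:drugi-dowod}, however, deliberately takes a different route. Rather than piggyback on \cite{MO24}, it reduces directly from Hilbert's 10th Problem (in the form of Fact~\ref{source-of-und}) using the CQ-ization machinery: for polynomials $P_s,P_b$ it sets $\phi_b=\cqize(\ucqize{P_b})$ and $\Phi_s=\good\wedge\Planet(\mars)\wedge\bigvee_k(\mars\cansee\mu{M^s_k})$, then argues the equivalence via Lemmas~\ref{lem:pi-rachunek-sredni} and~\ref{lem:pi-rachunek-maly}. The point of doing it this way is not economy --- your argument is shorter --- but to showcase the CQ-ization technique on a simple example before it is deployed for the harder Theorems~\ref{th:cq-cq} and~\ref{th:po-jednej-nierownosci}, where no black-box reduction from \cite{MO24} is available. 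So: your proof is fine as a proof, but it bypasses precisely the content the section was written to illustrate.
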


In the short Section \ref{sec:drugi-dowod} we show how Theorem \ref{th:UCQ-duzo} 
can be very easily proven by  CQ-izing one of the UCQs from \cite{IR95}.
Then, in Section \ref{sec:cq-cq}
we prove our second main result:

\begin{theorem}\label{th:cq-cq}
  {\em For each rational $\varepsilon>0$ the following problem is undecidable:}\\
  Given are Boolean CQs  $ \beta_s $ and $\beta_b $.
  Does $(1+\varepsilon) \cdot \beta_s \dbcontainednt \beta_b$?
\end{theorem}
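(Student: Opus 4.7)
The proof is a reduction from the undecidable problem of Theorem \ref{th:UCQ-duzo}. Given a UCQ $\Phi_s = \bigvee_{j=1}^{\jjj}\phi_j$ and a CQ $\phi_b$, I would construct CQs $\beta_s$ and $\beta_b$, depending on the chosen rational $\varepsilon > 0$, with the property that $(1+\varepsilon)\cdot\beta_s \dbcontainednt \beta_b$ if and only if $\Phi_s \dbcontainednt \phi_b$. To obtain $\beta_s$ I would apply the CQ-ization of Section \ref{sec:cqizacja} to $\Phi_s$, setting $\beta_s = \cqize(\Phi_s)$, and I would apply an analogous padding transformation to $\phi_b$ to obtain a CQ-ized $\beta_b$ living on the same ``scale''. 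The key property to be extracted from CQ-ization here is quantitative: the construction should be tunable, so that by widening or deepening the padding gadgets that simulate disjunction inside a single CQ, one can force, uniformly over every non-trivial $D$, the ratio $\nawn{\beta_s \two D}/\nawn{\beta_b \two D}$ to track $\nawn{\Phi_s \two D}/\nawn{\phi_b \two D}$ up to any prescribed multiplicative slack. Given $\varepsilon$, I would fix this parameter so that the slack stays below $1+\varepsilon/2$, leaving room inside the $(1+\varepsilon)$ coefficient.

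With such a tight CQ-ization available, the two directions of the equivalence are routine. In the ``yes'' direction, $\Phi_s \dbcontainednt \phi_b$ combined with the $1+\varepsilon/2$ approximation delivers $(1+\varepsilon)\cdot\beta_s \dbcontainednt \beta_b$ on every non-trivial $D$. In the ``no'' direction, any non-trivial $D$ witnessing $\Phi_s \two D > \phi_b \two D$ can be amplified --- by taking disjoint copies or by scaling the numerical witness encoded inside $D$, in the style of \cite{IR95} --- to push the ratio $\nawn{\Phi_s \two D}/\nawn{\phi_b \two D}$ above any chosen threshold, in particular above $(1+\varepsilon)^2$. After this amplification, the inequality survives the CQ-ization approximation (since $(1+\varepsilon)^2/(1+\varepsilon/2) > 1+\varepsilon$) and refutes $(1+\varepsilon)\cdot\beta_s \dbcontainednt \beta_b$.

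The main obstacle, compared to Theorems \ref{th:tak-samo-trudne} and \ref{th:UCQ-duzo}, is precisely this quantitative tunability. A single CQ-ization with some fixed, bounded error is enough for those earlier statements; here, the approximation error must be driven below an arbitrary $\varepsilon$, which demands a finer analysis of how the CQ-izing gadgets distort the counts --- both an explicit multiplicative upper bound on the distortion, and a verification that no non-trivial $D$ escapes this bound. It is this tight error control that lets us encode a Hilbert-10-style strict comparison inside a CQ-vs.-CQ instance with slack as small as we like, and it is exactly the same tight error control that stops just short of removing the slack altogether, which would settle the long-standing open status of $QCP^{\text{\tiny bag}}_{\text{\tiny CQ}}$ itself.
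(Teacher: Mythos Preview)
Your proposal has a genuine gap at its central claim: you assert that CQ-ization can be ``tuned'' so that the ratio $\nawn{\beta_s \two D}/\nawn{\beta_b \two D}$ tracks $\nawn{\Phi_s \two D}/\nawn{\phi_b \two D}$ up to a prescribed multiplicative slack, uniformly over all non-trivial $D$. But you give no mechanism for this, and none of the gadgets in Section~\ref{sec:cqizacja} are parameterised in a way that would let you drive the distortion toward~$1$. The distortion comes from the trips in $\trip{}{2\leq}$ (where two or more aliens leave Venus), and these contribute terms that are products of monomial evaluations on several planets at once --- quantities that bear no simple multiplicative relation to $\Phi_s \two D$ or $\phi_b \two D$. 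Widening or deepening the padding does not suppress these cross-terms; it only changes which polynomials are being multiplied.

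The paper does \emph{not} control this distortion generically. Instead it engineers the \emph{source problem}: rather than reducing from Theorem~\ref{th:UCQ-duzo}, it reduces from a tailored Hilbert-style question (Fact~\ref{source-of-und-dwa}) where the input polynomials $P_s, P_b$ are promised to satisfy $\cccc\cdot\coef(M,P_s)\le\coef(M,P_b)$ for every monomial $M$, with $\cccc$ chosen so that $\cccc^2\ge\ccc=1+\varepsilon$. Then $\beta_s$ and $\beta_b$ are the CQ-izations of $\ucqize{P_s}$ and $\ucqize{P_b}$ (plus a small rewarding subquery $\eta_1$), and the argument splits $\trip{}{}$ into $\trip{}{1}\cup\{\bar\venus\}$, handled via the polynomial inequality as in Section~\ref{sec:drugi-dowod}, and $\trip{}{2\leq}$, handled by a purely combinatorial count (Lemma~\ref{lem:kombinatoryczny}): for any fixed assignment of monomials to destination planets, the number of matching $\mmm$-trips exceeds the number of matching $\kkk$-trips by a factor of at least $\cccc^{|A|}\ge\cccc^2\ge\ccc$, precisely because of the coefficient promise. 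This is where the $\varepsilon$ actually enters, and it is not an approximation bound on CQ-ization at all --- it is a constraint baked into the undecidable source. Your amplification-by-disjoint-copies idea for the ``no'' direction is also not what happens; the counterexample is simply $\planetize(D_\Xi)$, where Lemma~\ref{lem:plus-jeden} gives exact values.
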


\noindent Notice that, again, the assumption that $D$ must be non-trivial is crucial,
because $(1+\varepsilon)\not\leq 1$. 

Theorem \ref{th:UCQ-duzo}, as implied by  \cite{MO24} and as proved in 
Section \ref{sec:drugi-dowod}, needs
$\corange{\Phi_s}$ to be a UCQ with potentially unbounded number of disjuncts. What would happen if we only allowed UCQs with two disjunct? 
Taking $\varepsilon=1$, as a corollary to Theorem \ref{th:cq-cq}, one gets that the problem remains undecidable:

\begin{corollary}\label{th:UCQ-dwa}
  {\em The  problem:}
  Given are Boolean CQs  $ \corange{\phi_s} $ and $\corange{\phi_b} $.
  Does $(\corange{\phi_s}\vee \corange{\phi_s})\dbcontainednt \corange{\phi_b}$?~{\em is undecidable.}
\end{corollary}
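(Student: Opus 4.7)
The plan is trivial once Theorem \ref{th:cq-cq} is in hand: this corollary is nothing but the $\varepsilon = 1$ case of that theorem, repackaged as a syntactic disjunction rather than as a scalar multiplication. The key identity is that, by the very definition of $\Psi \two D$ for a UCQ $\Psi$, duplicating a disjunct literally doubles the count:
\[
(\phi_s \vee \phi_s) \two D \;=\; \nawn{\phi_s \two D} + \nawn{\phi_s \two D} \;=\; 2 \cdot \nawn{\phi_s \two D} \;=\; (2 \cdot \phi_s) \two D.
\]
Hence, for every structure $D$, the inequality $(\phi_s \vee \phi_s) \two D \leq \Phi_b \two D$ is literally the same arithmetic statement as $(1+1) \cdot \phi_s \two D \leq \Phi_b \two D$. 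Consequently the two containments $(\phi_s \vee \phi_s) \dbcontainednt \Phi_b$ and $(1+1) \cdot \phi_s \dbcontainednt \Phi_b$ coincide, and by Theorem \ref{th:cq-cq} applied with $\varepsilon = 1$ the latter, hence the former, is undecidable. The reduction from the problem in Theorem \ref{th:cq-cq} with $\varepsilon = 1$ to the problem in the corollary is thus the identity: given an instance $\langle \beta_s, \beta_b \rangle$, return $\langle \beta_s, \beta_b \rangle$, now reading the first component as the CQ $\phi_s$ to be disjoined with itself, and the second as $\Phi_b$.

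The only point worth a remark is why the non-triviality proviso is essential here. On a trivial database such as the ``well of positivity'' mentioned just before Theorem \ref{th:UCQ-duzo}, any CQ $\Phi_b$ returns $1$ while $\phi_s \vee \phi_s$ returns at least $2$, so without the restriction to non-trivial $D$ containment would fail for nonsensical reasons and the problem would be trivially decidable. This is precisely the same obstruction that forces $\dbcontainednt$ (rather than $\dbcontained$) into the statement of Theorem \ref{th:cq-cq}. Since the reduction itself introduces no new difficulty, the entire obstacle sits inside Theorem \ref{th:cq-cq} and nowhere else.
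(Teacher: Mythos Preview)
Your proof is correct and matches the paper's own reasoning exactly: the paper simply remarks that taking $\varepsilon=1$ in Theorem~\ref{th:cq-cq} yields the corollary, which is precisely the identity $(\phi_s\vee\phi_s)\two D = 2\cdot\nawn{\phi_s\two D}$ you spell out.
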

\noindent
Another, surprisingly straightforward, corollary to Theorem \ref{th:cq-cq} is 
one of the main results of \cite{MO24}:

\begin{corollary}\label{th:po-jednej-nierownosci}
  {\em The following problem is undecidable:}\\
  Given are Boolean CQs   $ \corange{\gamma_s} $ and $\corange{\gamma_b}$, each of them with at most one inequality.
  Does
  $\; \corange{\gamma_s} \dbcontained {\gamma_b}$?
\end{corollary}

\noindent
{\sc Proof of the corollary.} Assume the problem from Corollary \ref{th:po-jednej-nierownosci} is decidable. We show how to decide,
for two
 Boolean CQs  $ \beta_s $ and $\beta_b $,  if
 $2 \cdot \beta_s \dbcontainednt \beta_b$ (contradicting
 Theorem \ref{th:cq-cq} for $\epsilon=1$).
Let $P$ be a new unary relation symbol. Define:
$\alpha_s  = \; \mars\!\neq\!\venus  \wedge  P(\mars)  \wedge  P(\venus)  \wedge  P(z)  \wedge  P(z')$
and $\alpha_s  = \;  P(z)  \wedge  P(z') \wedge z\neq z'$. The following requires some focus, but otherwise is not hard to see:

\noindent
(i) for each $D$ there is $\alpha_s \two D \; \leq \; 2\cdot \alpha_b \two D$;
\hfill
(ii) there exists $D$  such that  $\alpha_s \two D \; =\; 2\cdot \alpha_b \two D$.

In order to prove (ii) just take a $D$ where the only $P$ facts are $P(\mars)$ and $P(\venus)$. To see (i) notice that $\alpha_b$ represents the number of ways a
pair of distinct elements can be drawn from a set of at least 2 elements, while $\alpha_s$ represents the number of ways any pair of elements can
be drawn from this set.
Now, let $\gamma_s=\beta_s \wedge \alpha_s   $ and $\gamma_b= \beta_b \wedge \alpha_b  $. Then  $2 \cdot \beta_s \dbcontainednt \beta_b$ if and only if
$\gamma_s  \dbcontained  \gamma_b $. \qed

\vspace{0.5mm}
 \noindent
 Could our CQ-ization technique be employed to  
 prove that $QCP^{\text{\tiny bag}}_{\text{\tiny CQ}}$ itself is undecidable (that is, to prove Theorem 
 \ref{th:cq-cq}, but for $\varepsilon=0$)? Well, first of all 
 $QCP^{\text{\tiny bag}}_{\text{\tiny CQ}}$ would need to really be undecidable. But, if it is, then 
 we are not aware of any principal reasons why CQ-ization couldn't be able to prove it. We are however 
 well aware of certain technical difficulties which stopped us one infinitely small 
 $\varepsilon$ before reaching this ultimate goal, and which we have  so far been unable to overcome.

\section{Preliminaries. Part 2: more notations, some simple observations, and remarks}

We often (and sometimes silently) make use of the following obvious {but fundamental\footnote{{It may be the right moment to spot the connection
between the monomial $x^2y$ and the query $X(\wildcard) \wedge X(\wildcard) \wedge Y(\wildcard)$.}}} observation:

\vspace{-1mm}
\begin{observation}\label{obs:12}
If $\psi=\bigwedge_{k=1}^{\kkk} \psi_k$ is a \corange{CQ}, such that $\psi_k$ and $\psi_{k'}$ do not share variables for $k\neq k'$, then\\ \phantom{.}\hfill$\psi \two D \;=\; \prod_{k=1}^{\kkk}\; \nawni{\psi_k \two D}$.\hfill\phantom{.}
\end{observation}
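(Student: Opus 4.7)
The plan is essentially to unwind the definition of $\psi \two D$ as the cardinality of $Hom(\psi, D)$, and then exhibit a bijection between this set and the Cartesian product $\prod_{k=1}^{\kkk} Hom(\psi_k, D)$; the desired equality then follows from the elementary fact that the cardinality of a Cartesian product is the product of the cardinalities.

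First, I would set up the bijection. Let $V_k = var(\psi_k)$. The disjointness hypothesis gives $var(\psi) = \bigsqcup_{k=1}^{\kkk} V_k$ as a disjoint union (constants behave innocuously since they are interpreted rigidly). Define
\[
\Phi : Hom(\psi, D) \;\longrightarrow\; \prod_{k=1}^{\kkk} Hom(\psi_k, D), \qquad h \mapsto \bigl(h\!\restriction_{V_k}\bigr)_{k=1}^{\kkk},
\]
which is well-defined because any homomorphism from $canonical\_structure(\psi)$ to $D$ sends every atom of $\psi$ to an atom of $D$, so in particular it sends every atom of each $\psi_k$ to an atom of $D$.

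Next, I would exhibit the inverse. Given a tuple $(h_1, \ldots, h_{\kkk})$ with $h_k \in Hom(\psi_k, D)$, define $h : var(\psi) \to \vertex(D)$ by gluing: for $v \in V_k$, set $h(v) = h_k(v)$. This is unambiguous exactly because of the disjointness assumption; were two $V_k$'s to intersect, the gluing could fail to be a function. Every atom of $\psi$ belongs to some $\psi_k$, and since $h$ agrees with $h_k$ on $V_k$, the image of that atom under $h$ is the same as under $h_k$, hence lies in $D$. Thus $h \in Hom(\psi, D)$, and one checks that this assignment is a two-sided inverse of $\Phi$.

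Finally, passing to cardinalities yields
\[
\psi \two D \;=\; |Hom(\psi, D)| \;=\; \Bigl|\prod_{k=1}^{\kkk} Hom(\psi_k, D)\Bigr| \;=\; \prod_{k=1}^{\kkk} |Hom(\psi_k, D)| \;=\; \prod_{k=1}^{\kkk} \nawni{\psi_k \two D},
\]
which is the claimed identity. There is no real obstacle here — the only subtle point is invoking the disjointness hypothesis in precisely the right place, namely to guarantee that the componentwise gluing of the $h_k$'s is a well-defined function on $var(\psi)$; without that hypothesis, the inverse map would not even be defined.
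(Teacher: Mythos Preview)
Your proof is correct and is the standard argument. The paper does not actually prove this observation; it simply introduces it as ``the following obvious observation'' and uses it freely thereafter, so there is nothing to compare against beyond noting that your argument is exactly the natural one the authors presumably had in mind.
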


\vspace{-1.5mm}
\subsection{Short remark about the role of constants}\label{o-stalych}

As we mentioned above, all the queries we consider in this paper are Boolean, but we allow
for constants in the language. 
As it is explained in \cite{MO24}, in the context of Query Containment, 
non-Boolean queries can always be translated to Boolean ones, for the cost of using constants, and vice versa, constants can be eliminated, but we then need non-Boolean queries. 

In this paper, we want our Theorem \ref{th:tak-samo-trudne} to be true for
general queries, also non-Boolean. But it is much easier to think about Boolean queries instead (since they return a natural number, a simple object, instead of 
a multirelations which are pain to imagine), so we need to accept presence of constants in the signature. This choice  has its downsides too: it makes Definition \ref{def:pleasant} necessary.

On the other hand, 
in order to prove 
our negative results, like 
Theorems \ref{th:UCQ-duzo} and \ref{th:cq-cq},
it is convenient to have two constants, $\venus$ and $\mars$, in the language. But again, 
using the argument from
\cite{MO24} one could trade them for free variables.


\vspace{-0.5mm}
\subsection{Some other non-standard notations}

\begin{convention}[How variables are named]\label{convention}{\em
\textbullet~ Whenever we consider a UCQ
$\Phi=\bigvee_{j=1}^{\jjj}\; \phi_j$, where each of the $\phi_j$ is a CQ, we assume 
that $var(\phi_j)$ and $ var(\phi_{j'})$ are disjoint (unless, of course, $j=j'$).

\noindent 
\textbullet~ Whenever we say that some $\Phi$ is a query over $\Sigma$ we assume that 
$x_i\not \in var(\Phi)$ for any $i\in \mathbb N$.

\noindent 
\textbullet~ 
If a variable only occurs once in a query, it does not merit a name. It is  then denoted as $\wildcard$.

\noindent 
\textbullet~ Suppose $\Phi$ is a query, $D$ is a structure, and $h$ is a function from some subset  of $var(\Phi)$ to $\vertex(D)$. Then $\Phi[h]$ is the query obtained from $\Phi$ by replacing each $v$ in the domain of $h$ with 
$h(v)$.
}
\end{convention}
For subtle technical reasons we will sometimes need to consider a slightly restricted class
of queries\footnote{It will be
wise to assume, during first reading, that there are no constants in $\Sigma$ and hence all queries are pleasant.}. This restriction, as we are  going to explain,
will not hurt the generality of our theorems:

\begin{definition}\label{def:pleasant}
A 
CQ or UCQ
is {\em pleasant} if each of its atomic formulas contains at least one 
variable.
\end{definition}

\section{CQ-ization by (a running) example}\label{sec:running-example}

In order to present our results, we will formally define, in Section \ref{sec:cqizacja}, three operations
that turn queries into other queries, and structures into other structures: \emph{relativization} ($\cansee$), \emph{CQ-ization} ($\cqize$) and \emph{marsification} ($\planetize$). 
But  let us first try to 
illustrate the main idea informally, using an example.
%




\begin{definition} Suppose $\Sigma$ is a relational signature and $V,R\not\in \Sigma$ are two 
 binary relation symbols. There may be constants in $\Sigma$, but  constants $\venus$ (Venus) and $\mars$ (Mars) are not in $\Sigma$. Denote $\Sigma^+=\Sigma \cup \{V, R, \mars, \venus\}$. 
 \end{definition}

We think of $V(a,b)$, for some vertices $a,b$ of some $D$, as an abbreviation of ``$b$ is visible from $a$''.

\vspace{2mm}
\noindent
\begin{minipage}[l]{0.47\linewidth}

\begin{example} 
Let $\Sigma = \singleton{E_1, E_2, E_3}$ and let
$\Psi = \psi_1 \vee \psi_2 \vee \psi_3$ be an UCQ, where, for each $i\in\{1,2,3\}$ there is $\psi_i = E(y_i, z_i)$.
Then \textbf{CQ-ization} of $\Psi$, denoted as $\cqize(\Psi)$, will be the CQ:
\end{example}
\vspace{-7mm}
\begin{alignat}{3}
    &\;\,\psi_1 \;\wedge\; \psi_2 \;\wedge\; &&\psi_3 \label{eq:0}  \\
    & \textstyle \bigwedge_{i = 1}^3&& R(\venus, x_i) &&\wedge R(x_i, \venus)\label{eq:1}\\
    &\textstyle \bigwedge_{i,j = 1,\, i \neq j}^3&& R(x_i,x_j)&&\wedge R(x_j, x_i)\label{eq:2}\\
    &\textstyle \bigwedge_{i=1}^3&& V(x_i,v_i)&&\wedge V(x_i, z_i) \label{eq:3}
\end{alignat}
\end{minipage}\hspace{0.6cm}
\begin{minipage}[c]{0.45\linewidth}
 \noindent
 \includegraphics[width=1\linewidth]{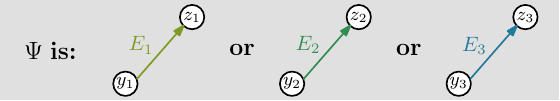}%
 \vspace{0mm}
 \noindent
 \includegraphics[width=1\linewidth]{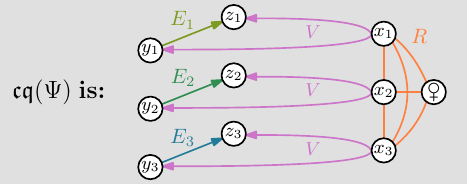}%
   \end{minipage}

\vspace{1mm}
Notice that in line (\ref{eq:0}) we now have the {\bf conjunction} of our original $\psi_1$, $\psi_2$ and $\psi_3$ (not a disjunction, like in $\Psi$).
Notice also that, compared to $\Psi$, the query $\cqize(\Psi)$ has three new variables, $x_1$, 
$x_2$ and $x_3$. We think of them as aliens\footnote{These variables have special importance in 
$\cqize(\Psi)$, and we need a name for them. Originally we called them ``guards'', but this term
could be confusing, since it is in use in the context of guarded formulas. So we decided for aliens, and found it convenient to think that the home planet of the aliens is Venus, and that they like space trips. Stay tuned.}
 sitting on some planets. The constraint from (\ref{eq:3}) says that, for each $i\in\{1,2,3\}$, query $\psi_i$ must be satisfied in the
part of the Universe visible for $x_i$. Constraints from (\ref{eq:1}) and (\ref{eq:2}) say that $x_1$,
$x_2$ and $x_3$, together with $\venus$, must form a clique in the relation $R$.

Now, as an example, imagine  the structure 
${\mathbb D}= \bigcup_{i=1}^3\singleton{E_i(a,b), E_i(b,a)}$,
Then \textbf{marsification} of ${\mathbb D}$, denoted as $\planetize({\mathbb D})$ is the structure over $\Sigma^+$ including the facts of ${\mathbb D}$ and: 

\noindent
\begin{minipage}{0.47\linewidth}
\phantom{a}
\vspace{-2.5mm}
  \begin{align}
    &\textstyle \bigwedge_{A \in \Sigma \cup \singleton{R,V}} A(\venus,\venus) \label{eq:4}\\
    &\wedge\; R(\venus,\mars), R(\mars,\venus)\label{eq:5}\\
    &\wedge\; V(\mars,a), V(\mars,b)\label{eq:6}
\end{align}  
\end{minipage}%
\hspace{0.6cm}
\begin{minipage}[c]{0.45\linewidth}
 \noindent
 \includegraphics[width=1\linewidth]{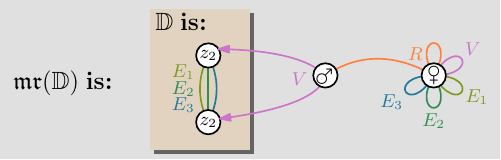}
\end{minipage}

\vspace{1mm}
Notice that the facts in lines (\ref{eq:4}) and (\ref{eq:5}) do not depend on $\mathbb D$. They will be present in
$\planetize({D})$ for any $D$. Facts in line (\ref{eq:6}) assert that every vertex of $\mathbb D$ is visible from
Mars
.

Clearly, $\Psi \two {\mathbb D} = 2+2+2=6$.
Now, let us try to calculate $\cqize(\Psi) \two \planetize({\mathbb D})$. We need to count the
homomorphisms from $\cqize(\Psi)$ to $\planetize({\mathbb D})$. The most convenient way
to count them is to group them according  to the values assigned to the aliens
$x_1$, $x_2$ and $x_3$. 

Due to the constraint from (\ref{eq:2}), which requires that the images of $x_1$, $x_2$ and $x_3$ form an
$R$-clique, there are 4 ways to assign these values (recall that  $\planetize({\mathbb D}) \models R(\venus,\venus)$).

One possibility is to send $x_1$  to $\mars$ and keep both $x_2$ and $x_3$ on $\venus$. Then,
the part of the Universe visible for $x_1$ (after it is mapped to $\mars) $is the original ${\mathbb D}$ (due to line (\ref{eq:6})), while the variables of $\psi_2$ and $\psi_3$
can only be all assigned to $\venus$. This leaves us with $\psi_1({\mathbb D})\cdot 1\cdot 1=2$ possible homomorphisms.
Second possibility is symmetric, $x_1$ and $x_3$ go to $\venus$, and each of them has only one
way to satisfy their queries in the part of the Universe they can see, and $x_2$ goes to $\mars$, from 
which she can see the original ${\mathbb D}$ and hence has
 $\psi_2({\mathbb D})=2$  ways to satisfy $\psi_2$. Third possibility is again symmetric, with
 $x_3$ being sent to $\mars$ and $x_1$ and $x_2$ staying on $\venus$.
Finally, the fourth possibility is that all three aliens, $x_1$, $x_2$ and $x_3$ stay on $\venus$.
This leads to a single homomorphism, assigning $\venus$ to all  variables in  $\planetize(\Psi)$.
%
\newcommand{\mcdot}{\hspace{-0.6mm}\cdot\hspace{-0.8mm}}
%
\begin{equation} \text{So we get:~\hspace{4mm}}
\cqize(\Psi) \two \planetize({\mathbb D})\;=\; \psi_1({\mathbb D})\mcdot 1 \mcdot 1 \;+\; 1 \mcdot \psi_2({\mathbb D}) \mcdot 1 \; +\; 1\mcdot1\mcdot\psi_3({\mathbb D})  \;+\;
1 \mcdot 1 \mcdot 1 \; =\;
\Psi({\mathbb D})+1\;\;\;\;\; \label{eq:proste}
\end{equation}
%
And (please compare!) this is exactly what Lemma
\ref{lem:plus-jeden}, at the end of Section \ref{sec:cqizacja} says. 
See how useful this Lemma can be in the context of query containment:

\begin{corollary}[of Lemma \ref{lem:plus-jeden}]\label{lem:corollary}
Suppose for two UCQs $\;\Phi_s$ and $\Phi_b$ it holds that $\Phi_s \not\bagcontained \Phi_b$, and $D$ is a counterexample for containment. Then $\cqize(\Phi_s) \not\bagcontained \cqize(\Phi_b)$, with $\planetize(D)$ as a counterexample.
\end{corollary}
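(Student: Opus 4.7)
The plan is to apply Lemma~\ref{lem:plus-jeden} twice, to $\Phi_s$ and to $\Phi_b$ respectively, both evaluated on the marsified structure $\planetize(D)$, and then propagate the strict inequality through the ``$+1$''. Since the lemma gives an \emph{equality} between $\cqize(\Psi) \two \planetize(D)$ and $1 + \nawn{\Psi \two D}$ for every UCQ $\Psi$, no information is lost and counterexamples transfer verbatim.

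In more detail: since $D$ witnesses $\Phi_s \not\bagcontained \Phi_b$, there is a strict inequality $\Phi_s \two D > \Phi_b \two D$ between natural numbers. Invoking Lemma~\ref{lem:plus-jeden} once for each query yields $\cqize(\Phi_s) \two \planetize(D) = 1 + \nawn{\Phi_s \two D}$ and $\cqize(\Phi_b) \two \planetize(D) = 1 + \nawn{\Phi_b \two D}$. Adding the same constant $1$ to both sides of a strict inequality of natural numbers preserves it, so $\cqize(\Phi_s) \two \planetize(D) > \cqize(\Phi_b) \two \planetize(D)$, which is precisely the assertion that $\planetize(D)$ witnesses $\cqize(\Phi_s) \not\bagcontained \cqize(\Phi_b)$.

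There is no real obstacle at this step; all the combinatorial work has been deferred to Lemma~\ref{lem:plus-jeden}, whose statement is already motivated in full by the running example of Section~\ref{sec:running-example} (the three aliens either all stay on $\venus$, contributing the single ``$+1$'' homomorphism, or exactly one of them visits $\mars$, contributing a copy of the corresponding $\phi_j \two D$). The reason it is worth singling out this consequence as a separate corollary is forward-looking: the pair $\pair{\cqize,\planetize}$ will serve as the workhorse for the undecidability reductions in Sections~\ref{sec:drugi-dowod} and \ref{sec:cq-cq}, mechanically transporting counterexamples to UCQ containment (as constructed in \cite{IR95} from Hilbert's 10th Problem) into counterexamples to CQ containment, modulo the mild enrichment of the signature by the symbols $R$, $V$, $\mars$ and $\venus$.
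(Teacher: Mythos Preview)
Your proof is correct and matches the paper's intended argument: the corollary is stated without proof precisely because it follows immediately from two applications of Lemma~\ref{lem:plus-jeden} and the observation that adding $1$ to both sides preserves a strict inequality. The only implicit hypothesis worth noting is that $\cqize$ (and hence Lemma~\ref{lem:plus-jeden}) is defined only for \emph{pleasant} UCQs over $\Sigma$, so the corollary silently inherits this restriction; but since $\cqize(\Phi_s)$ and $\cqize(\Phi_b)$ appearing in the conclusion already presuppose this, there is no genuine gap.
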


\noindent
But what if $\Phi_s\bagcontained \Phi_b$? 
Does it imply that  $\cqize(\Phi_s) \bagcontained \cqize(\Phi_b)$? Or, if not, what
 additional assumptions on $\Phi_s$ and $\Phi_b$ are needed for such implication to hold?
There are many nuances here, and in order
to address these questions we need to be much more formal and precise.

\section{CQ-ization and related tricks. Let us now be formal and precise. }\label{sec:cqizacja}

 \subsection{Planets, alien R-cliques and their  space trips}

Let us start from Venus which has been assigned  a central role
in our technique. Recall the facts from (\ref{eq:4}) in Section \ref{sec:running-example}.
They were needed to make sure that every query 
will have at least one satisfying assignment in the part of the Universe visible from $\venus$. 
This is generalized in:

\begin{definition}[Good structures]\label{def:venus}\label{def:good}
 Let $\Atoms$ be the set of all atomic formulas, without variables,
that can be built using the constant $\venus$, and constants and relation symbols from
$\Sigma$, and in which $\venus$ occurs at least once. Then, by $\good$ we will mean the CQ:
$\hspace{3mm} V(\venus, \venus) \hspace{1mm} \wedge
\hspace{1mm} R(\venus, \venus) \hspace{1mm}\wedge
\bigwedge_{\alpha\in\;\Atoms} \hspace{1mm} \alpha$.\\
Structure $D$ will be called  \emph{good} if $D\models \good$.
\end{definition}

Notice that (since there are no variables in $\good$) if $D$ is a good structure, and $\Psi$ is any query, then
${\Psi\two D} = {(\Psi\wedge \good) \two D}$.
It is also easy to see that:

\begin{lemma}\label{lem:pleasant-spelnione}
    If $D$ is good, and if $\phi$ is a  pleasant CQ over
$\Sigma$ then $\phi$ can be satisfied in $D$ by the homomorphism
 mapping all the variables in $\phi$   to $\venus$.
 \end{lemma}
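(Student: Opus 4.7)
The plan is a direct unfolding of the definitions of \emph{pleasant} and \emph{good}, with the pleasantness assumption being used in precisely one place.

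Let $h \colon var(\phi) \to \vertex(D)$ be the homomorphism sending every variable of $\phi$ to $\venus$. To show $h$ witnesses satisfaction of $\phi$ in $D$, I take an arbitrary atom $\alpha$ of $\phi$ and verify that $\alpha[h]$ holds in $D$. Since $\phi$ is a query over $\Sigma$, the relation symbol of $\alpha$ comes from $\Sigma$, and every constant appearing in $\alpha$ is a constant of $\Sigma$ as well (recall that $\venus$ is \emph{not} in $\Sigma$ by our convention fixing $\Sigma^+ = \Sigma \cup \{V,R,\mars,\venus\}$). After applying $h$, each argument position of $\alpha[h]$ is occupied either by one of those original constants of $\Sigma$ or by $\venus$ (the image of a variable). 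Hence $\alpha[h]$ is a ground atom built only from relation symbols and constants of $\Sigma \cup \{\venus\}$.

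Here is where pleasantness enters: because $\phi$ is pleasant, the atom $\alpha$ contains at least one variable, and this variable is mapped by $h$ to $\venus$. Consequently $\venus$ occurs at least once in $\alpha[h]$. Together with the previous paragraph, this means $\alpha[h] \in \Atoms$ in the sense of Definition~\ref{def:good}. By the definition of $\good$, every element of $\Atoms$ is satisfied in $D$, so $D \models \alpha[h]$.

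Since $\alpha$ was arbitrary, $h$ satisfies every atom of $\phi$, so $h$ is a homomorphism from $canonical\_structure(\phi)$ to $D$, which is exactly what was to be shown. There is no real obstacle here: the only subtle point is that pleasantness is genuinely necessary, because a constant-only atom $\alpha$ of $\phi$ would remain unchanged by $h$ and would \emph{not} in general be guaranteed by $\good$ (the conjunction in Definition~\ref{def:good} only ranges over atoms in which $\venus$ appears).
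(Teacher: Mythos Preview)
Your proof is correct and is exactly the straightforward unfolding of the definitions that the paper has in mind; the paper in fact does not spell out a proof at all, merely stating that the lemma is ``easy to see'' and then remarking (as you also do) that pleasantness is essential because a variable-free atom over $\Sigma$ need not appear in $\Atoms$.
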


 Notice also  that if 
$\phi$ is not pleasant, then the  claim is not necessarily  true: suppose
$A,B\in \Sigma$ and $a$ is a constant, from $\Sigma$. Then $B(a,x)\wedge A(a) $ may not be 
satisfied in some good structure $D$. This is because formula $\Atoms$ is defined in such a way that,
while atom $B(a,\venus)$ must indeed {be present} in every good structure, there is no guarantee that $A(a) $ will be {there as well}.

A property  related to $\good$ is $\venus$-foggy. A structure is $\venus$-foggy if Venus can only 
see itself: 

\begin{definition}\label{def:foggy}
    A good structure $D$ will be called {\em $\venus$-foggy} if $\venus$ is the only $v$ satisfying  $D\models V(\venus,v)$.
\end{definition}

\noindent
Recall, that when counting the homomorphisms in Section \ref{sec:running-example}, we used the fact that an alien mapped to $\venus$ had only one way of satisfying her CQ. This was since $\planetize(D)$ is defined to always be $\venus$-foggy.


Apart of Venus we may have other planets. A vertex is a planet, if an alien can be mapped there, satisfying the constraint from (\ref{eq:1}). Not surprisingly, Mars in Section \ref{sec:running-example} was a planet:

\begin{definition}[Planets]\label{def:planets}    
By~$\Planet(x)$ we will mean  the formula $R(\venus,x)\wedge R(x,\venus)$. For a structure $D$,
over $\Sigma^+$,
 the set of elements of $\vertex(D)$ which satisfy $\Planet(x)$ will be denoted as  $\Planets(D)$ (or just $\Planets$, when $D$ is clear from the context) while
 $\Planetsbezv(D)\; = \; \Planets(D) \setminus \singleton{\venus}$.
\end{definition}

Next formula to be defined is
$\galaxy$, which formalizes and generalizes the constraint from (\ref{eq:2}) (and also includes the constraint from (\ref{eq:1})):

\begin{definition}[R-Cliques]\label{def:galaxies}   
For variables  $x_1,x_2, \ldots x_\jjj$, by $\galaxy_\jjj(x_1,x_2,\ldots x_\jjj )$ we 
mean the formula:
$$ \bigwedge_{1\leq j\leq \jjj} \Planet(x_j) \;\;\;\;\; \wedge \;\;\;\;\; \bigwedge_{1\leq j< j'\leq \jjj} R(x_j,x_{j'})\wedge R(x_j,x_{j'}).$$
\end{definition}


\subsection{Relativization. Or who can see what?}

Recall that we read $V(a,b)$ as  ``$b$ is visible from $a$''. Imagine $D$ as a universe, and $p$ as a planet in this universe. Then $seen(p,D)$ is the part of the universe
that can be seen from $p$:

\begin{definition}[Seen]
For $p\in \Planets(D)$  define
$\; seen(p,D)=D\restriction_{\{a\in \vertex(D)\;:\; D\;\models \; V(p,a)\}}$.
\end{definition}

\begin{definition}[Relativization, denoted as $\cansee$]\label{def:planetification}
   Let $\phi$ be a CQ over $\Sigma$ and let $x$ be a  variable or a constant. Then
by   $x \cansee \phi$ we denote the following CQ over  $\Sigma^+$:
$\hspace{3mm} \phi \hspace{1mm} \wedge \hspace{1mm} \Planet(x) \hspace{1mm} \wedge \bigwedge_{y\in var(\phi)} V(x,y)$.
\end{definition}

So, in order to satisfy $ x\cansee \phi$ in some $D$, the variable $x$ must be mapped onto some planet $p$, and the query
$\phi$ must be then satisfied in $seen(p,D)$.
It is  worth mentioning that the $\phi$ above may be the empty CQ. Then
 $x \cansee \phi$ is just a lonely planet gazing into darkness. Clearly:

\begin{lemma}\label{lem:widzenie}
 Let $\phi$ be a CQ over $\Sigma$,  let $D$ be a structure over $\Sigma^+$ and let $p\in \vertex(D)$ be a planet. Then:
$$ (p \cansee \phi) \two D \;\;=\;\; \phi \two seen(p,D)$$
\end{lemma}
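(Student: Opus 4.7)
The statement is essentially a bookkeeping equality between two sets of homomorphisms, so my plan is to exhibit an explicit bijection between them by peeling off the extra conjuncts in $p \cansee \phi$ one at a time. First I would unfold the left-hand side according to the definition of $\two$: $(p \cansee \phi) \two D$ is the number of homomorphisms $h : \var(\phi) \to \vertex(D)$ such that (i) $h$ satisfies every atom of $\phi$ in $D$, (ii) $D \models \Planet(p)$, and (iii) $D \models V(p, h(y))$ for every $y \in \var(\phi)$. Since $p$ is assumed to be a planet, clause (ii) is automatic and contributes no constraint on $h$; so I may discard it.

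Next I would observe that clause (iii) is exactly the statement that the image of $h$ lies in the vertex set of $seen(p,D)$, namely $\{a \in \vertex(D) : D \models V(p,a)\}$. Consequently the homomorphisms counted on the left are precisely the maps $h : \var(\phi) \to \vertex(seen(p,D))$ that satisfy every atom of $\phi$ in $D$.

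The remaining step is to replace ``in $D$'' by ``in $seen(p,D)$'' inside clause (i). This is where the definition of $D\restriction_{A}$ does the work: by construction $seen(p,D)$ contains exactly those atoms of $D$ whose vertices all lie in the seen set. Hence if $h$ maps into this set, an atom $\alpha$ of $\phi[h]$ holds in $D$ if and only if it holds in $seen(p,D)$. (Here I would briefly note that $\phi$ is over $\Sigma$, so its atoms only use relation symbols from $\Sigma$, and the restriction operation preserves all such atoms whose arguments remain in the selected subset.) This establishes a bijection between the two sets of homomorphisms, yielding the required equality of cardinalities.

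The only place where care is needed — and the closest thing to an obstacle — is the signature mismatch: $\phi$ lives over $\Sigma$ while $D$ lives over $\Sigma^+$, and $seen(p,D)$ is obtained by induced restriction on $D$ viewed as a $\Sigma^+$-structure. I would therefore make it explicit that applying a $\Sigma$-query to a $\Sigma^+$-structure simply ignores the extra symbols, so the equality $\phi \two seen(p,D) = \phi \two (seen(p,D)\restriction_{\Sigma})$ is tautological and no hidden atoms of $V$ or $R$ can interfere with the counting.
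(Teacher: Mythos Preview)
Your argument is correct and is exactly the routine verification the authors have in mind; the paper offers no proof at all, introducing the lemma only with the phrase ``It is now easy to notice that''. Your explicit handling of the $\Planet(p)$ conjunct, the $V(p,\cdot)$ constraints, and the induced-substructure semantics spells out precisely what the authors leave implicit.
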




\noindent
It is now a very easy exercise to show the following Lemma (use Lemma \ref{lem:pleasant-spelnione}):

 \begin{lemma}\label{lem:czarna-dziura}
 \hspace{-1.7mm} If $\phi$ is a  pleasant CQ over $\Sigma$,  and
  $D$ is a
 $\venus$-foggy structure over $\Sigma^+$\hspace{-0.5mm}, then ${(\venus\cansee \phi)  \two D} \hspace{-0.8mm}\;=\;\hspace{-0.8mm}1$.
\end{lemma}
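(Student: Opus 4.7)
The plan is to show that the homomorphism sending every variable of $\phi$ to $\venus$ is the unique satisfying assignment of $\venus \cansee \phi$ in $D$, with existence coming from pleasantness and uniqueness from fogginess.

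First I would unfold the definition of $\venus \cansee \phi$ (Definition \ref{def:planetification}):
\[
\venus \cansee \phi \;\equiv\; \phi \;\wedge\; \Planet(\venus) \;\wedge\; \bigwedge_{y \in var(\phi)} V(\venus, y).
\]
Because the first argument is the constant $\venus$ (not a variable), a homomorphism $h$ witnessing $\venus\cansee \phi$ in $D$ is just a map $h:var(\phi)\to \vertex(D)$ that makes every conjunct true, with $\venus$ interpreted as itself.

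Next I would use the $\venus$-foggy hypothesis. For every $y \in var(\phi)$ we need $D \models V(\venus, h(y))$, and by Definition \ref{def:foggy} the only element $v$ with $D \models V(\venus, v)$ is $\venus$ itself. Hence $h(y) = \venus$ for every $y \in var(\phi)$, i.e. there is \emph{at most one} candidate homomorphism, namely the constant map to $\venus$. The conjunct $\Planet(\venus)$ is automatically satisfied because a $\venus$-foggy structure is good and any good structure models $R(\venus,\venus)$, so it models $\Planet(\venus)$.

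It remains to verify that this unique candidate is indeed a homomorphism, i.e.\ that $\phi$ is satisfied when every variable is mapped to $\venus$. This is exactly the content of Lemma \ref{lem:pleasant-spelnione}, applied to the pleasant CQ $\phi$ and the good structure $D$. Combining the two directions gives $(\venus \cansee \phi) \two D = 1$. The only subtlety — and the reason pleasantness is required rather than arbitrary CQs — is precisely this last step: without pleasantness, $\phi$ might contain a variable-free atom like $A(a)$ over constants from $\Sigma$ that is not forced into $D$ by $\good$, so the all-$\venus$ map could fail to be a homomorphism and the count would collapse to $0$ instead of $1$.
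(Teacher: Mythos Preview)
Your proof is correct and follows essentially the same approach as the paper: use Lemma~\ref{lem:pleasant-spelnione} (plus goodness) for existence of the all-$\venus$ homomorphism, and the $\venus$-foggy hypothesis for uniqueness. You are slightly more explicit than the paper in verifying the $\Planet(\venus)$ and $V(\venus,\venus)$ conjuncts, which the paper's proof absorbs into its appeal to Lemma~\ref{lem:pleasant-spelnione} without comment.
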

%


\subsection{CQ-ization}

The most important of our operations, the one which turns UCQs into CQs, is:


\begin{definition}[CQ-ization]\label{def:cqizacja}
 Let
$\Phi=\bigvee_{j=1}^{\jjj}\; \phi_j$ be a pleasant UCQ, over $\Sigma$ (see Definition \ref{def:pleasant}), where each  $\phi_j$ is a CQ.
Then by $\cqize(\Phi)$  we mean the following CQ over $\Sigma^+$:

\vspace{-3mm}
$$\galaxy_\jjj(x_1,x_2,\ldots x_\jjj)\wedge \bigwedge_{j=1}^{\jjj} (x_j \cansee \phi_j) $$
\end{definition}
%
%

\vspace{-2mm}
\noindent
Like in Section \ref{sec:running-example}, for a UCQ $\Phi$ with  $\jjj$ disjuncts, the CQ $\cqize(\Phi)$ is produced by
creating an $\jjj$-element R-clique of new variables (recall that we see them as aliens)
and  making each of these aliens responsible for watching one of the disjuncts of $\Phi$. Notice that
a CQ is a special case of UCQ, so our CQ-ization, as defined above, can be also applied to conjunctive queries (a CQ with a
single alien variable is then produced). It is now very easy to see that:

\begin{observation}\label{obs:jeden-alien}
If $\psi$ is a CQ, and $D$ is any structure, then:
$\cqize(\psi) \two D\;=\; \textstyle\sum_{p\in \Planets} \nawn{(p\cansee \psi) \two D}$.
\end{observation}

\subsection{Counting the homomorphisms. Space trips. And two lemmas, easy but crucial.}

What we are going to do in next sections
is {\bf all} about calculating $\cqize(\Phi)\two D$ for numerous
 $\Phi$ and  $D$.

 The general method of doing it will always be the same (and the same as  in Section \ref{sec:running-example}):
 in order to count the homomorphisms from $\cqize(\Psi)$ to $D$, we will group them, count them in each group separately, and then add the results. Homomorphisms $h$ and $h'$ will fall into the same group if $h(x)=h'(x)$ for each alien variable $x$  of $\cqize(\Psi)$.
 In other words, a group will be characterized by a partial homomorphism, from the alien variables of $\cqize(\Psi)$ to $D$.
 Such partial homomorphisms will be mentioned so often that they deserve a short name, hence we call them ``trips''. This term makes sense in our narrative: trips are all the possible ways of sending aliens to the planets of $D$:

 \begin{definition}[Trips] \label{def:trip}
 For $\jjj \in\mathbb N$, and a structure $D$, a mapping $h:\{x_1,x_2,\ldots x_\jjj\}\rightarrow \Planets(D)$ will be called an $\jjj$-trip if
 $D\models \galaxy (h(x_1), h(x_2),\ldots h(x_\jjj))$. Set of all $\jjj$-trips will be denoted as $\trip{\jjj}{}(D)$, or just $\trip{\jjj}{}$
 when $D$ is clear from the context.
 \end{definition}

 If $D$ is good then $D\models R(\venus,\venus)$ and hence $h$ which maps all the arguments to $\venus$ (recall that our aliens are Venusians, so this means
 that they all stay home)
 is an $\jjj$-trip. This trip  will be called $\bar \venus$.
 Notice also that if  $p\in \Planetsbezv(D)$ and $1\leq j\leq \jjj$ then a mapping that sends
 $x_j$ to $p$ and keeps  all other aliens on $\venus$ is a $\jjj$-trip. Such trip will
be called $\bar p^j$. The set of all $\jjj$-trips of this kind, with exactly one alien mapped to a planet from $\Planetsbezv(D)$,
will be called $\trip{\jjj}{1}(D)$.
Finally, we will use the notation
$\trip{\jjj}{2\leq}(D)$ for $\trip{\jjj}{}(D)\setminus(\trip{\jjj}{1}(D)\cup\{\bar \venus\})$, that is for trips
with more than one  alien leaving $\venus$.

Using  the above defined language, there were 4 trips possible in our $\planetize({\mathbb D})$  in Section \ref{sec:running-example}, namely $\bar \mars^1$,  $\bar \mars^2$,
  $\bar \mars^3$ (all of them in $\trip{\jjj}{1}$, where of course $\jjj=3$) and $\bar \venus$.
The set $\trip{\jjj}{2\leq}$ was empty in this example.  But in general structures trips from $\trip{\jjj}{2\leq}$  exist, and they are  main
 source of trouble in
 Section \ref{sec:cq-cq}.

 To see a structure where  $\trip{\jjj}{2\leq}$ is non-empty, imagine
 ${\mathbb D}'$ being the $\planetize({\mathbb D})$ from Section \ref{sec:running-example} with the  additional facts:
$R(\venus, \saturn), R(\saturn, \venus), R(\mars, \saturn), R(\saturn, \mars)$, and possibly also with some facts of the remaining relations of $\Sigma^+$.
Then $\trip{\jjj}{1}({\mathbb D}')$ (with $\jjj=3$) will comprise 6 trips, namely $\bar \mars^j$ and $\bar \saturn^j$ for each
$j\in\{1,2,3\}$. And $\trip{\jjj}{2\leq}({\mathbb D}')$ will comprise 6 trips too, because (in this particular case)
trips from $\trip{\jjj}{2\leq}({\mathbb D}')$ will be bijections between the three aliens and the three planets {Venus, Mars, and Saturn} (notice that there would be even more
trips in $\trip{\jjj}{2\leq}$ if we also added  $R(\saturn, \saturn)$ to the structure).

Now let $\Phi$ be like in Definition \ref{def:cqizacja} and let $h\in \trip{\jjj}{}(D) $ for some structure $D$ over $\Sigma^+$.  Recall that
notation  $\cqize(\Phi)[h]$  was defined by Convention \ref{convention}. Then $\cqize(\Phi)[h] \two D$ is exactly the number of homomorphisms,
from $\cqize(\Phi)$ to $D$, which agree with $h$ on the alien variables. The following lemma  is now quite obvious (notice that
the sum in Lemma \ref{lem:pi-rachunek-sredni} reflects the addition from (\ref{eq:proste}) in Section \ref{sec:running-example}):

\begin{lemma}\label{lem:pi-rachunek-sredni}
Let $\Phi$ and  $D$  be as above. Then  \;
$\cqize(\Phi) \two D \;=\;  \sum_{h\in \trip{\jjj}{}(D)}\;  \nawni{\cqize(\Phi)[h] \two D}$ 
\end{lemma}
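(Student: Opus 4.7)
The proof is a straightforward disjoint union argument, grouping homomorphisms by the values they assign to the aliens $x_1,\ldots,x_\jjj$. Let me outline the plan.

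First I would unpack the definition: by definition, $\cqize(\Phi)\two D = |Hom(\cqize(\Phi), D)|$, where a homomorphism is a map $g: var(\cqize(\Phi)) \to \vertex(D)$ sending every atom of the canonical structure of $\cqize(\Phi)$ to an atom of $D$. Given such a $g$, define its \emph{alien restriction} $h_g : \{x_1,\ldots,x_\jjj\} \to \vertex(D)$ by $h_g(x_j) = g(x_j)$. Since $\cqize(\Phi)$ explicitly contains the subformula $\galaxy_\jjj(x_1,\ldots,x_\jjj)$ (see \cref{def:cqizacja}), and every atom of this subformula must be satisfied under $g$, we have $D \models \galaxy_\jjj(g(x_1),\ldots,g(x_\jjj))$. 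By \cref{def:trip} this means $h_g \in \trip{\jjj}{}(D)$.

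Next I would partition $Hom(\cqize(\Phi), D)$ according to the value of $h_g$. Explicitly,
$$Hom(\cqize(\Phi), D) \;=\; \bigsqcup_{h \in \trip{\jjj}{}(D)} \{\, g \in Hom(\cqize(\Phi), D) \;:\; h_g = h \,\}.$$
The key step is to identify each block of this partition with $Hom(\cqize(\Phi)[h], D)$. By Convention \ref{convention}, $\cqize(\Phi)[h]$ is obtained from $\cqize(\Phi)$ by substituting $h(x_j)$ (treated as a constant) for each $x_j$. A homomorphism $g$ with $h_g = h$ is therefore in bijection with an assignment of the remaining variables (those of $\cqize(\Phi)[h]$) to $\vertex(D)$ satisfying all atoms of $\cqize(\Phi)[h]$, i.e.\ with an element of $Hom(\cqize(\Phi)[h], D)$. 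Hence the cardinality of each block equals $\cqize(\Phi)[h] \two D$.

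Finally, taking cardinalities in the disjoint union yields
$$\cqize(\Phi)\two D \;=\; \sum_{h \in \trip{\jjj}{}(D)} \bignawn{\cqize(\Phi)[h] \two D},$$
which is the desired equality. There is essentially no obstacle: the lemma is a definitional bookkeeping step, and the only thing that needs a moment of care is verifying that the alien restriction of every homomorphism is indeed a trip, which follows immediately from the presence of $\galaxy_\jjj$ inside $\cqize(\Phi)$.
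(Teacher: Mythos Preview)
Your proof is correct and is exactly the natural argument; the paper itself gives no proof, calling the lemma ``obvious'' and noting it was ``already silently used'' in the running example of Section~\ref{sec:running-example}. Your write-up makes explicit the one nontrivial observation (that the alien restriction of any homomorphism is a trip because $\galaxy_\jjj$ sits inside $\cqize(\Phi)$), which is precisely what the paper leaves to the reader.
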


Next lemma, equally crucial and equally obvious, and also already silently used in Section \ref{sec:running-example}, tells us how to count the homomorphisms in each group
(notice that the product in Lemma \ref{lem:pi-rachunek-maly} reflects the multiplications from \ref{eq:proste}):

\begin{lemma}\label{lem:pi-rachunek-maly}
Let $\Phi$ be as above.
Then:
 $\;\;\cqize(\Phi)[h] \two D\;\;=\;\; \prod_{j=1}^{\jjj}\;\;\nawni{\phi_j \two seen(h(x_j),D)}$
\end{lemma}

\noindent
{\sc Proof:}
By Observation \ref{obs:12} we get: $\cqize(\Phi)[h] \two D=\prod_{j=1}^{\jjj} \nawni{ h(x_j) \cansee \phi_j \two D}$. Now use Lemma \ref{lem:widzenie}.
\qed




\subsection{Marsification}

Recall that $\mars$ (Mars) is another special constant we use.
Marsification is an operation which takes, as its input, a structure $D$ over $\Sigma$ and returns a new structure $\planetize(D)$, over
$\Sigma^+$, defined as:
\vspace{-0.5mm}
$$ \mathrm{canonical\_structure}(\good\wedge \Planet(\mars)) \;\cup\; D \;\cup\; \{V(\mars,a): a\in \vertex(D)\} $$ 
\vspace{-0.25mm}
Notice, that this formalizes what we did in Section \ref{sec:running-example}: $\mathrm{canonical\_structure}(\good\wedge \Planet(\mars))$ reflects (\ref{eq:4})
and (\ref{eq:5})  while $\{V(\mars,a): a\in \vertex(D)\}$ reflects
(\ref{eq:6}).

\begin{lemma}\label{lem:plus-jeden}
Let  $\Phi=\bigvee_{j=1}^{\jjj}\; \phi_j$ be a pleasant UCQ and let $D$ be a structure, both  over $\Sigma$. Then
$$ \cqize(\Phi) \two \planetize(D)\;\;=\;\; 1\,+\, \Phi \two D.$$

\end{lemma}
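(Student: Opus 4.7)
The plan is to apply \cref{lem:pi-rachunek-sredni} and explicitly enumerate the $\jjj$-trips available in $\planetize(D)$. First, I would observe that $\Planets(\planetize(D)) = \{\venus, \mars\}$: the only atoms of the form $R(\venus,x)$ or $R(x,\venus)$ that appear in $\planetize(D)$ are those with $x \in \{\venus,\mars\}$, and both of these elements do satisfy $\Planet$. Moreover, $R(\mars,\mars)$ is \emph{not} an atom of $\planetize(D)$, so no $\jjj$-trip can send two (necessarily distinct) aliens to $\mars$, since the $\galaxy_\jjj$ clique constraint would then fail. Consequently $\trip{\jjj}{}(\planetize(D)) = \{\bar\venus\} \cup \{\bar\mars^j : 1 \leq j \leq \jjj\}$, giving exactly $1 + \jjj$ trips to sum over.

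Next, by \cref{lem:pi-rachunek-maly}, each trip $h$ contributes $\prod_{j=1}^{\jjj} \nawni{\phi_j \two seen(h(x_j), \planetize(D))}$, so I would verify two supporting facts: (i) $\planetize(D)$ is good and $\venus$-foggy, since by construction the only $V$-atom mentioning $\venus$ is $V(\venus,\venus)$; and (ii) $seen(\mars, \planetize(D)) = D$, because $\{a : \planetize(D) \models V(\mars,a)\} = \vertex(D)$ and the only atoms of $\planetize(D)$ whose arguments all lie in $\vertex(D)$ are precisely the atoms of $D$ itself (no $V$- or $R$-atoms are added with both arguments in $\vertex(D)$, and every atom contributed by $\good$ mentions $\venus$). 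Given (i), \cref{lem:czarna-dziura} together with the pleasantness of each $\phi_j$ yields $\nawni{\phi_j \two seen(\venus, \planetize(D))} = 1$; given (ii), $\nawni{\phi_j \two seen(\mars, \planetize(D))} = \phi_j \two D$.

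Putting this together, the trip $\bar\venus$ contributes $\prod_j 1 = 1$, and each trip $\bar\mars^j$ contributes $\phi_j \two D$: its single traveling alien sees all of $D$, while the remaining $\jjj-1$ aliens stay on $\venus$ and each contribute a factor of $1$. Summing the $1 + \jjj$ contributions yields
\[
\cqize(\Phi) \two \planetize(D) \;=\; 1 \;+\; \sum_{j=1}^{\jjj} \nawni{\phi_j \two D} \;=\; 1 + \Phi \two D,
\]
as desired. The only point that requires genuine care -- and on which the whole identity hinges -- is the verification that $\trip{\jjj}{2\leq}(\planetize(D)) = \emptyset$: had $\planetize$ installed a self-loop $R(\mars,\mars)$, or an additional non-$\venus$ planet, the sum would acquire spurious cross-terms and the clean ``$1 + \Phi \two D$'' would be lost. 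Everything else is a routine unfolding of Definitions \ref{def:cqizacja}--\ref{def:trip} together with the lemmas already established.
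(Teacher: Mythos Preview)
Your proof is correct and follows essentially the same approach as the paper's own proof: decompose via \cref{lem:pi-rachunek-sredni}, enumerate the trips in $\planetize(D)$ as $\{\bar\venus\}\cup\{\bar\mars^j\}$, and evaluate each summand using \cref{lem:pi-rachunek-maly}, \cref{lem:czarna-dziura}, and the identification $seen(\mars,\planetize(D))=D$. If anything, you are more explicit than the paper on one point---you spell out that the absence of $R(\mars,\mars)$ is what rules out trips sending two distinct aliens to $\mars$, whereas the paper simply asserts $\trip{\jjj}{}=\{\bar\venus\}\cup\trip{\jjj}{1}$ from the fact that $\mars$ is the only non-$\venus$ planet.
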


The idea of the proof of Lemma \ref{lem:plus-jeden} {should be} clear for Readers who read Section \ref{sec:running-example},
so we defer it to Appendix \ref{app:X}. Notice, however, that there must be some unexpected nuances lurking there, if we needed to assume that
$\Phi$ is pleasant.

\section{Proof of Theorem \ref{th:tak-samo-trudne} }\label{sec:tak-samo-trudne}

Now we are ready to prove Theorem \ref{th:tak-samo-trudne}. It postulates that the following two claims are equivalent:
  \begin{align}
      &\text{$QCP^{\text{\tiny bag}}$, restricted to instances where $\psi_s$ is a CQ and $\Psi_b$ is a UCQ, is decidable.}\label{eq:7h}\\
      &\text{$QCP^{\text{\tiny bag}}_{\text{\tiny CQ}}$ is decidable.}\label{eq:7c}
  \end{align}

\noindent
Clearly,  $(\ref{eq:7h})$ implies
 $(\ref{eq:7c})$. 
In order to show  the opposite implication, we introduce one more claim:
\begin{equation}
   \text{$QCP^{\text{\tiny bag}}$, restricted to instances where $\psi_s$ is a 
 CQ and $\Psi_b$ is a pleasant UCQ,
is decidable.}\label{eq:7s} 
\end{equation}
Now, to prove Theorem \ref{th:tak-samo-trudne}, we show 
that  $(\ref{eq:7s}) \Rightarrow (\ref{eq:7h})$ and that
$(\ref{eq:7c}) \Rightarrow (\ref{eq:7s})$.
Proof of  $(\ref{eq:7s})\Rightarrow (\ref{eq:7h})$ is by a standard
reasoning which does not use the idea of CQ-ization and we defer it to \cref{app:A}.\vspace{1mm}\\
{\bf The rest of this section is devoted to the proof of the 
$(\ref{eq:7c})\Rightarrow (\ref{eq:7s})$ implication.}\vspace{1mm}\\
~
Suppose we are given a CQ  $\psi_s$ and a  pleasant UCQ\footnote{
The number of disjuncts in the $b$-query will always in this paper be denoted as $\mmm$. The number 
of disjuncts in the $s$-query (if the $s$-query is a UCQ) will be always $\kkk$.
}
$\Psi_b=\bigvee_{m=1}^{\mmm}\; \phi_m$
(both over $\Sigma$) and we want to decide whether $\psi_s\bagcontained\Psi_b$.
Let $\eta_0$  be the formula $\bigwedge_{ m=1}^{\mmm}\;V(\venus,\wildcard)$. Notice that
if $D$ is  $\venus$-foggy  then $V(\venus,\wildcard)\two D=1$ and hence  $\eta_0\two D=1$. And if $D$ is good but not
$\venus$-foggy then $\eta_0\two D\geq 2^\mmm\geq \mmm$.\vspace{1mm}

\noindent
Define CQs $\gamma_s$ and $\gamma_b$ over $\Sigma^+$ as:
$\;\;\;\; \gamma_s\; = \;\good \wedge \cqize(\psi_s)\;\;\;\ \text{and}\;\;\;\; \gamma_b\;=\;
\;\eta_0 \wedge \cqize(\Psi_b)$.
\begin{equation}
\hspace{-3mm}
\text{Now,
all we  need to show is that:} \hspace{23mm}
\psi_s \bagcontained \Psi_b \quad\iff\quad
\gamma_s \bagcontained \gamma_b \hspace{18mm} \label{ding80}
\end{equation}

\noindent
Let us explain what is going on here. We assume $(\ref{eq:7c})$. We want to know whether $\psi_s\bagcontained\Psi_b$. So we
{\em almost} replace  $\psi_s$ and $\Psi_b$ with $\cqize(\psi_s)$ and $\cqize(\Psi_b)$ and feed them to the (hypothetical) algorithm for $QCP^{\text{\tiny bag}}_{\text{\tiny CQ}}$. ``{Almost}'', because
our $\gamma_s$ is equivalent to $\cqize(\psi_s)$  on good structures, {but} on non-good ones it simply returns 0. And because, our $\gamma_b$
is equivalent to $\cqize(\Psi_b)$ only on $\venus$-foggy structures, while on non-$\venus$-foggy ones $\gamma_b$ is given a huge bonus which makes
it much easier for it  to win against $\gamma_s$.\smallskip\\
\noindent
{\bf Proof of (\ref{ding80}) $(\neg \Rightarrow \neg)$.}
Recall that, for each structure $D$, the structure $\planetize(D)$ is good and \venus-foggy.
Hence, from  \cref{lem:corollary}, if
$\psi_s\not\bagcontained \Psi_b$ then
$\gamma_s\not\bagcontained \gamma_b $.\smallskip\\
\noindent
{\bf Proof of (\ref{ding80}) $( \Rightarrow )$.}
Assume that $\psi_s \bagcontained \Psi_b$ and take a structure $\DDD$ over $\Sigma^{+}$
(which will now be fixed).
Of course, we can assume that $\DDD$ is good; otherwise $\gamma_s {\two} \DDD=0$ and there is nothing
to prove. \vspace{1.2mm}\\
What remains to be shown is that:
\vspace{-6.4mm}
\begin{equation} \hspace{60mm} \cqize(\psi_s)\two \DDD\;\; \leq \;\; \gamma_b \two \DDD  \hspace{10mm} \label{eq:psik}\end{equation}
%
%

\noindent
In order to prove (\ref{eq:psik}) first of all recall that we know from Observation \ref{obs:jeden-alien} that:
 $$\cqize(\psi_s) \two \DDD \;\;=\;\; \textstyle\sum_{p\in \Planets} \nawn{(p\cansee \psi_s) \two \DDD} \;\;=\;\;
   (\venus\cansee \psi_s) \two \DDD \;\;+\;\; \textstyle\sum_{p\in \Planetsbezv} \nawn{(p\cansee \psi_s) \two \DDD}$$

   \vspace{1mm}
   \noindent
It is also easy to see that:\vspace{-0.59cm}
\begin{alignat*}{3}
   \phantom{a}\hspace{3cm} \gamma_b \two \DDD \;&=\; &&\textstyle\sum_{t \in\trip{\mmm}{}}  \,\gamma_b[t] \two \DDD
\;\;\geq\\
&\geq \;\; \gamma_b[\bar \venus] \two \DDD\;+\;&&\textstyle\sum_{t \in \trip{\mmm}{1}}
\;\gamma_b[t] \two \DDD \;\;\geq\vspace{1mm}\\
&\geq\;\; \gamma_b[\bar \venus] \two \DDD\;+\;&&\textstyle\sum_{t \in \trip{\mmm}{1}} \; \nawni{\cqize(\Psi_b)[t] \two \DDD}
\;\;\;\;
\end{alignat*}

  \vspace{1mm}
\noindent
The equality above follows from Lemma \ref{lem:pi-rachunek-sredni}. First inequality follows from the fact that
$\trip{\mmm}{1} \cup \singleton{\bar \venus} \subseteq \trip{\mmm}{}$ and second inequality follows directly from the definition of $\gamma_b$.
%

\noindent
In consequence, in order to prove (\ref{eq:psik}) it will  be enough to show the following two lemmas:

\begin{lemma}\label{lem:niewiemco-planety}
$\textstyle\sum_{p\in \Planetsbezv} \nawn{(p\cansee \psi_s) \two \DDD} \;\leq\; \textstyle\sum_{t \in \trip{\mmm}{1}} \; \nawni{\cqize(\Psi_b)[t] \two \DDD}$.
\end{lemma}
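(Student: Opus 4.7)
The plan is to establish the inequality pointwise in $p$: for each $p \in \Planetsbezv$, I will exhibit $\mmm$ distinct trips in $\trip{\mmm}{1}$ --- namely $\bar p^1, \bar p^2, \ldots, \bar p^\mmm$ --- whose joint contribution already dominates $\nawn{(p \cansee \psi_s) \two \DDD}$. Summing over all $p \in \Planetsbezv$ then yields the inequality, since $\trip{\mmm}{1}$ is precisely the disjoint union of these families as $p$ ranges over $\Planetsbezv$.

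The first step would be to expand each single-traveler trip by Lemma \ref{lem:pi-rachunek-maly}. For the trip $\bar p^m$ sending $x_m$ to $p$ and keeping the other aliens on $\venus$, this gives $\nawn{\cqize(\Psi_b)[\bar p^m] \two \DDD} = \nawn{\phi_m \two seen(p,\DDD)} \cdot \prod_{i\neq m} \nawn{\phi_i \two seen(\venus,\DDD)}$. Because $\DDD$ is good and each $\phi_i$ is pleasant (as $\Psi_b$ is pleasant), Lemma \ref{lem:pleasant-spelnione} tells us that the constant map to $\venus$ already witnesses $\nawn{\phi_i \two seen(\venus,\DDD)} \geq 1$. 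So every factor in the $\venus$-product is at least $1$ and can be harmlessly dropped, leaving $\nawn{\cqize(\Psi_b)[\bar p^m] \two \DDD} \geq \nawn{\phi_m \two seen(p,\DDD)}$.

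The second step is to sum this inequality over $m$ and invoke the containment hypothesis. By definition of the UCQ $\Psi_b$, $\sum_{m=1}^{\mmm} \nawn{\phi_m \two seen(p,\DDD)} = \nawn{\Psi_b \two seen(p,\DDD)}$. Applying the assumption $\psi_s \bagcontained \Psi_b$ to the structure $seen(p,\DDD)$ yields $\nawn{\Psi_b \two seen(p,\DDD)} \geq \nawn{\psi_s \two seen(p,\DDD)}$, and Lemma \ref{lem:widzenie} identifies this last quantity with $\nawn{(p \cansee \psi_s) \two \DDD}$. Chaining these inequalities and summing over $p \in \Planetsbezv$ closes the argument.

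There is no real obstacle in this lemma: the whole point is that non-$\venus$ planets come in multiplicity, so for each $p$ we have $\mmm$ distinct trips to pay for the single term $\nawn{(p\cansee \psi_s) \two \DDD}$. It is instructive to note why the same approach cannot absorb $p = \venus$: the set $\trip{\mmm}{1}$ contains \emph{no} trip that sends a lone alien to $\venus$ (the only all-home trip is $\bar \venus$, which we have already set aside), so we have no source of the critical $\mmm$-fold budget. This is exactly the deficit that the companion $\venus$-lemma must patch using the $\eta_0$ factor of $\gamma_b$, and that is where the genuine technical subtlety of the proof lives.
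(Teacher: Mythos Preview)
Your proof is correct and follows essentially the same approach as the paper: expand $\trip{\mmm}{1}$ as $\{\bar p^m : p \in \Planetsbezv,\, 1 \le m \le \mmm\}$, apply Lemma~\ref{lem:pi-rachunek-maly} to each term, drop the $\venus$-factors (which are $\geq 1$ by goodness and pleasantness), sum over $m$ to recover $\Psi_b \two seen(p,\DDD)$, invoke the containment hypothesis, and translate back via Lemma~\ref{lem:widzenie}. Your closing remark on why $p=\venus$ fails is a nice addition and matches the intuition the paper hints at.
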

\begin{lemma}\label{lem:niewiemco-wenus}
    $(\venus\cansee \psi_s) \two \DDD \;\leq\;\gamma_b[\bar \venus] \two \DDD$.
\end{lemma}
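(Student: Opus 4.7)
The plan is to expand both sides of the inequality using the machinery already developed and then case-split on whether $\DDD$ is $\venus$-foggy. Applying Lemma \ref{lem:widzenie} rewrites the left-hand side as $\psi_s \two seen(\venus, \DDD)$. For the right-hand side, Observation \ref{obs:12} separates the two conjuncts of $\gamma_b$ (whose variables are disjoint), giving $\gamma_b[\bar \venus] \two \DDD = \nawn{\eta_0 \two \DDD} \cdot \nawn{\cqize(\Psi_b)[\bar \venus] \two \DDD}$, and Lemma \ref{lem:pi-rachunek-maly} rewrites the second factor as $\prod_{m=1}^{\mmm} \nawn{\phi_m \two seen(\venus, \DDD)}$. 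The task thus reduces to proving
\begin{equation*}
\psi_s \two seen(\venus, \DDD) \;\leq\; \nawn{\eta_0 \two \DDD} \cdot \prod_{m=1}^{\mmm} \nawn{\phi_m \two seen(\venus, \DDD)}.
\end{equation*}

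In the $\venus$-foggy case, Lemma \ref{lem:czarna-dziura} forces the left-hand side to equal $1$ (WLOG $\psi_s$ is pleasant, otherwise $\gamma_s \two \DDD$ already vanishes and the claim is vacuous). The right-hand side is at least $1$ because $\eta_0 \two \DDD \geq 1$ when $\DDD$ is good, and Lemma \ref{lem:pleasant-spelnione} gives each $\nawn{\phi_m \two seen(\venus, \DDD)} \geq 1$.

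In the non-foggy case the crucial quantitative fact is that $\eta_0 \two \DDD \geq 2^\mmm \geq \mmm$, so it suffices to prove the inequality with $\nawn{\eta_0 \two \DDD}$ replaced by $\mmm$. The hypothesis $\psi_s \bagcontained \Psi_b$ instantiated at the structure $seen(\venus, \DDD)$ yields $\psi_s \two seen(\venus, \DDD) \leq \sum_{m=1}^{\mmm} \nawn{\phi_m \two seen(\venus, \DDD)}$; combined with the pointwise lower bound $\nawn{\phi_m \two seen(\venus, \DDD)} \geq 1$ and the elementary inequality $\sum_{m} a_m \leq \mmm \cdot \prod_{m} a_m$, valid whenever every $a_m \geq 1$, this closes the argument.

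The main obstacle is justifying the pointwise lower bound $\nawn{\phi_m \two seen(\venus, \DDD)} \geq 1$ that powers both cases. The natural witness is the homomorphism sending every variable of $\phi_m$ to $\venus$: pleasantness of $\Psi_b$ guarantees that each atom of $\phi_m$ contains at least one variable, so after the substitution the atom belongs to $\Atoms$ and is forced by $\good$ to hold in $\DDD$, and a direct check shows it survives in $seen(\venus, \DDD)$. Once this lower bound is in hand, the role of $\eta_0$ becomes transparent: it was inserted into $\gamma_b$ precisely to supply the multiplicative slack $\mmm$ that must be paid when converting the sum produced by the containment hypothesis into the product demanded by the structure of $\cqize(\Psi_b)[\bar \venus]$.
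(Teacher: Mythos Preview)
Your proof is correct and follows essentially the same route as the paper's: rewrite both sides via Lemma~\ref{lem:widzenie} and Observation~\ref{obs:12} (together with Lemma~\ref{lem:pi-rachunek-maly}), then case-split on $\venus$-fogginess, using Lemma~\ref{lem:czarna-dziura} in the foggy case and the containment hypothesis plus the bound $\eta_0\two\DDD\geq\mmm$ in the non-foggy case. You are in fact slightly more careful than the paper in two places: you explicitly justify $\phi_m\two seen(\venus,\DDD)\geq 1$ via pleasantness of $\Psi_b$, and you address the pleasantness of $\psi_s$ needed to invoke Lemma~\ref{lem:czarna-dziura}, which the paper silently assumes.
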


\noindent
Informally, Lemma \ref{lem:niewiemco-planety} says that the number of such homomorphisms from $\cqize(\psi_s)$ to $\DDD$, that map the only alien in
 $\cqize(\psi_s)$ to some non-$\venus$ planet,  is already dominated by the number of homomorphisms from $\cqize(\Psi_b)$ to $\DDD$, which move exactly one
 of its aliens to some non-$\venus$ planet (while the remaining ones sit on $\venus$). No ``bonus'' {from $\eta_0$} is needed here.

 On the other hand, Lemma \ref{lem:niewiemco-wenus} deals with the number of homomorphisms from $\cqize(\psi_s)$ to $\DDD$, which map the only alien in
 $\cqize(\psi_s)$ to $\venus$. One would think that this number should be dominated by the number of homomorphisms from $\cqize(\Psi_b)$ to $\DDD$, which keep
 all the aliens  on $\venus$. But, as you are going to see in the {\sc proof of Lemma \ref{lem:niewiemco-wenus}} (which is deferred to {\sc Appendix} \ref{app:Z}), this is only true for $\venus$-foggy structures $\DDD$. Otherwise we need the ``bonus'' offered by query $\eta_0$.

\vspace{2mm}
\noindent {\sc Proof of Lemma \ref{lem:niewiemco-planety}.}
Recall that $\bar p^m$ (where  $p\in \Planetsbezv$ and $1\leq m \leq \mmm$) is a trip that maps $x_m$ to $p$ and all the other aliens to $\venus$.
In other words, $\bar p^m(x_i)=p$ if $i=m$ and $\bar p^m(x_i)=\venus$ if $i\neq m$. Clearly:
\begin{equation} \cqize(\Psi_b)[\bar p^m] \two \DDD \quad =\quad
    \textstyle\prod_{i=1}^{\mmm} \;\;\nawn{\phi_i \two seen(\bar p^m(x_i),\;\DDD)}
    \quad\geq\quad \phi_m \two seen(p,\;\DDD) \label{eq:gwiazdka} \end{equation}
    The above equality follows from Lemma \ref{lem:pi-rachunek-maly}, and the inequality holds since $\DDD$ is good. So:

\vspace{-3mm}
\begin{alignat*}{4}
    \textstyle\sum_{t \in \trip{\mmm}{1}}  \nawn{\cqize(\Psi_b)[t] \two \DDD}
&\;\;=\;\; \textstyle\sum_{p \in \Planetsbezv} \textstyle\sum_{m=1}^{\mmm}&&\nawn{\cqize(&&\Psi_b)[\bar p^m] \;\two\; \DDD&&}\tag{Definition of $\trip{\mmm}{1}$}\\
&\;\;\geq\;\;  \textstyle\sum_{p \in \Planetsbezv} \sum_{m=1}^{\mmm}\; &&\nawn{\phi_m &&\two seen(p,\DDD)&&}\tag{From (\ref{eq:gwiazdka})}\\
&\;\;=\;\;\textstyle\sum_{p \in \Planetsbezv} &&\nawn{\Psi_b &&\two seen(p,\DDD)&&}\tag{Definition of $\Psi_b$}\\
&\;\;\geq\;\; \textstyle\sum_{p\in \Planetsbezvind} &&\nawn{\psi_s &&\two seen(p,\DDD)&&}\tag{Since $\psi_s\bagcontained \Psi_b$}\\
&\;\;=\;\;\textstyle\sum_{p\in \Planetsbezvind} &&\nawn{(p&&\cansee \psi_s) \two \DDD&&} \tag*{\qed} 
\end{alignat*}

\section{Polynomials, and how to represent them}\label{sec:wielomiany}

Now we will show how the machinery developed in Section \ref{sec:cqizacja} can be used to prove
negative results, like 
Theorems \ref{th:UCQ-duzo} and \ref{th:cq-cq}. Our proofs use undecidability of Hilbert's 10th Problem, so we need a language to talk about polynomials.

Whenever we say ``monomial'' in this paper 
we mean a monomial, with coefficient $1$,
over numerical\footnote{We call variables ranging over {$\mathbb N$} {\em numerical variables} to
distinguish them from the first order logic variables in the queries. } 
variables $\xxi_1, \xxi_2,\dots  \xxi_\nnn$. 
The set of all possible monomials will be denoted as $\MMM$.
Whenever in this paper we say ``polynomial'' we mean a sum of such monomials. 
 This causes no loss of generality since
 (contrary to what our algebra textbook says),
 we do not assume that the monomials in one polynomial are pairwise distinct. 
Sometimes, however, we need to 
see the textbook polynomials, with natural coefficients, behind our polynomials: 

\begin{definition}
For a polynomial $P=\sum_{j=1}^{\jjj}\;M_j$ and  $M\in \MMM$ define:
$\;\coef(M,P)= |\{j : M_j=M \}| $.
\end{definition}

\noindent
 Whenever we say ``valuation'' we  mean a function $\Xi:\{\xxi_1, \xxi_2,\ldots  \xxi_\nnn\}\rightarrow \mathbb N$.
 For a monomial $M$ (or a polynomial $P$) and a valuation $\Xi$, by $M(\Xi)$ (or $P(\Xi)$) we 
 mean the result of applying $M$ to $\Xi(\bar \xxi)$.

Let us first explain how monomials can be  easily represented as CQs.  We will
 follow
\cite{IR95} here, but using different language. Let $\Sigma=\{X_1, X_2, \ldots X_\nnn\}$ be a signature of unary relation symbols.
Notice that there are no constants in $\Sigma$ and thus all queries we consider are pleasant. 

\begin{definition}\label{def:straszne-emy-dwa}
Let $M = \xxi_{a_1}\xxi_{a_2}\ldots\xxi_{a_\ddd}$ be a monomial of degree $\ddd$. Then define $\mu{M}$ as $\bigwedge_{i = 1}^{\ddd} X_{a_i}(\wildcard)$.
\end{definition}

\noindent
So, for example, if $M$ is $\xxi_2\xxi_4\xxi_2$ 
{then} $\mu{M}$ is $X_2(\wildcard)\wedge X_4(\wildcard)\wedge X_2(\wildcard)$. 
If $M=1$ is the (unique) monomial of degree 0
then $\mu{M}$ is the empty CQ. 

For a structure $D$, the cardinalities of the relations $X_n$ in $D$ define a valuation, in the natural way:

\begin{definition}\label{def:straszne-emy}
For a structure $D$, define $\;\Xi_D:\{\xxi_1, \xxi_2,\ldots  \xxi_\nnn\}\rightarrow \mathbb N$
as $\;\Xi_D(\xxi_n)= \nawni{X_n(\wildcard) \two D}$.
For~a~valuation
$\Xi:\{\xxi_1, \xxi_2,\ldots  \xxi_\nnn\}\rightarrow \mathbb N$
let $D_\Xi$ be some structure, over $\Sigma$, which has, for each $n$, exactly $\Xi(\xxi_n)$
vertices satisfying $X_n$.
\end{definition}

\noindent
Obviously, $\Xi_{D_\Xi}=\Xi$.  It is also not hard to see that $D$ indeed does to $\mu{M}$ what $\Xi_D$ does to $M$:
\begin{lemma}\label{lem:na-mon-ok-global}
 Let $D$ be a structure and let $M$ be a monomial. Then  $\;M(\Xi_D)=\mu{M} \two D$.

\end{lemma}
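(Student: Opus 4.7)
The plan is to unpack both sides of the claimed equality and to show they reduce to the same product, using Observation~\ref{obs:12} as the main tool. The key point is that, by Convention~\ref{convention}, each occurrence of $\wildcard$ in $\mu{M}$ is a fresh variable that occurs nowhere else, so the $\ddd$ atoms of $\mu{M}=\bigwedge_{d=1}^{\ddd} X_{\chi(M,d)}(\wildcard)$ are pairwise variable-disjoint.

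First I would evaluate the left-hand side. By definition of the valuation $\Xi_D$ we have $\Xi_D(\xxi_n) = \nawni{X_n(\wildcard)\two D}$, which is simply the number of elements of $\vertex(D)$ satisfying $X_n$. Since $M$ has degree $\ddd$ and its variables are listed (with multiplicity) by $\chi(M,\cdot)$, we get
\[
  M(\Xi_D) \;=\; \prod_{d=1}^{\ddd} \Xi_D(\xxi_{\chi(M,d)}) \;=\; \prod_{d=1}^{\ddd} \nawni{X_{\chi(M,d)}(\wildcard)\two D}.
\]

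Next I would evaluate the right-hand side. Since the atoms of $\mu{M}$ share no variables, Observation~\ref{obs:12} gives
\[
  \mu{M} \two D \;=\; \prod_{d=1}^{\ddd} \nawni{X_{\chi(M,d)}(\wildcard)\two D},
\]
which matches the expression for $M(\Xi_D)$. The edge case $\ddd=0$ (i.e.\ $M=1$, $\mu{M}$ the empty CQ) is handled by the standard convention that the empty product and the number of homomorphisms from the empty query both equal $1$.

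There is really no obstacle here: the lemma is a direct bookkeeping consequence of the definitions together with Observation~\ref{obs:12}. The only point that needs a moment's care is making explicit that the $\wildcard$'s in $\mu{M}$ stand for distinct fresh variables, which is exactly what Convention~\ref{convention} guarantees.
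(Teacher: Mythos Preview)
Your proof is correct and is exactly the natural argument. The paper does not give an explicit proof of this lemma (it is prefaced with ``It is also not hard to see that''), and your unpacking via Observation~\ref{obs:12} together with the wildcard convention is precisely the intended justification.
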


\noindent
Clearly, if monomials are  represented as CQs, then polynomials are  represented as UCQs:

\begin{definition}[Representing polynomials]\label{def:poly-to-cq}
\mbox{For a polynomial $P=\sum_{\hspace{-0.4mm}j=1}^{\jjj} \hspace{-0.4mm}M_j$ let: $\ucqize{P}=\bigvee_{\hspace{-1mm}j=1}^{\jjj}\mu{M_j}$}
\end{definition}

\begin{lemma}\label{lem:poly-reprezentacja}
 Let $D$ be a structure, and let $P$ be a polynomial. Then  $\;P(\Xi_D)= \ucqize{P} \two D$.
\end{lemma}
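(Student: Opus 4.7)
The plan is to prove this by straightforward unfolding of definitions, using Lemma \ref{lem:na-mon-ok-global} as the main workhorse. Write $P = \sum_{j=1}^{\jjj} M_j$ so that, by Definition \ref{def:poly-to-cq}, $\ucqize{P} = \bigvee_{j=1}^{\jjj} \mu{M_j}$. Then I would first apply the definition of UCQ application under bag semantics (from Section \ref{sec:prelims}) to obtain
\[
\ucqize{P} \two D \;=\; \sum_{j=1}^{\jjj} \nawni{\mu{M_j} \two D}.
\]

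Next, I would apply Lemma \ref{lem:na-mon-ok-global} term by term, which gives $\nawni{\mu{M_j} \two D} = M_j(\Xi_D)$ for every $j$. Substituting and using the fact that evaluation of a polynomial is simply the sum of evaluations of its monomials yields
\[
\sum_{j=1}^{\jjj} M_j(\Xi_D) \;=\; \Bigl(\textstyle\sum_{j=1}^{\jjj} M_j\Bigr)(\Xi_D) \;=\; P(\Xi_D),
\]
which is the required equality.

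There is essentially no obstacle here: the lemma is a packaging result that bundles the already-established monomial correspondence (Lemma \ref{lem:na-mon-ok-global}) with the definition of bag-semantics disjunction as addition (our convention $\Phi \two D = \Sigma_j \nawni{\phi_j \two D}$) and with the additivity of polynomial evaluation over monomial summands. The only tiny subtlety worth stating is that because we allow repeated monomials inside $P$, the sum $\sum_j M_j(\Xi_D)$ is literally the same sum as the one coming from $\ucqize{P} \two D$; no grouping by $\coef(M,P)$ is needed. The whole proof is a two-line chain of equalities.
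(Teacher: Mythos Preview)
Your proof is correct and is exactly the intended argument: the paper does not spell out a proof of this lemma, treating it as an immediate consequence of the definition of $\ucqize{P}$, the bag-semantics rule $\Phi \two D = \Sigma_j \nawni{\phi_j \two D}$, and Lemma~\ref{lem:na-mon-ok-global}. Your unfolding is precisely that.
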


\noindent
As   corollary  we get that, for polynomials $P_s$ and $P_b$, inequality
$\ P_s(\Xi) \leq P_b(\Xi)$ holds for every  valuation $\Xi$ if and only if
$\ucqize{P_s}\dbcontained \ucqize{P_b}$. In view of undecidability of Hilbert's 10th Problem it
immediately implies the main result from \cite{IR95}, that is undecidability of $QCP^{\text{\tiny bag}}_{\text{\tiny UCQ}}$.

Now, our  plan is
to CQ-ize $\ucqize{P_s}$ and/or $\ucqize{P_b}$ to get undecidability  for more restricted
fragments.
As the first step on this path we  ``relativize'' the notion of valuation: each planet will see its own valuation. And hence, each planet will have its own values for monomials and polynomials:

\begin{definition}\label{def:lokalne-wartosciowanie}
For a structure $D$ and for $p\in \Planets(D)$
define valuation $\Xi^p$ 
as $\Xi_{seen(p,D)}$.
\end{definition}

\noindent
So, for example,
$\Xi^{\tiny \mercury}(\xxi_3)$
is the number of vertices satisfying
$X_3$ which are visible from Mercury ($\small \mercury$).
The next lemma is an obvious modification of Lemmas \ref{lem:na-mon-ok-global} and \ref{lem:poly-reprezentacja}:

\begin{lemma}\label{lem:na-mon-ok-local} Let $M$ be a monomial and $P=\sum_{j=1}^{\jjj}\;M_j$ a polynomial. Let $D$ be a structure and $p\in \Planets(D)$. Then:
 \hfill  $\;M(\Xi^p) =  (p \cansee \mu{M} ) \two D\;\;\;\;$ and
 $\;\;\; \;P(\Xi^p) = (\bigvee_{k=1}^{\kkk} (p \cansee \mu{M_k})) \two D$.
\end{lemma}


\section{Proof of Theorem \ref{th:UCQ-duzo}}\label{sec:drugi-dowod}

In this very short section, we will prove Theorem  \ref{th:UCQ-duzo}.
The goal is to showcase our technique, and also to give the Reader an opportunity to get used to our language.
Our source of undecidability will be the following version of Hilbert's 10th Problem:

\begin{Fact}\label{source-of-und}
{\em The following problem is undecidable:}\\
Given are polynomials $P_s$ and $P_b$. Does $\ P_s(\Xi) \leq   1+P_b(\Xi)$ hold for every  valuation?
\end{Fact}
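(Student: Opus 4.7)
The plan is to reduce from Matiyasevich's theorem in its natural-number form: it is undecidable whether a polynomial equation $A(\Xi) = B(\Xi)$, with $A$ and $B$ polynomials having non-negative integer coefficients in the variables $\xxi_1,\ldots,\xxi_\nnn$, admits a solution $\Xi \in \mathbb N^\nnn$. Equivalently (taking the complement), it is undecidable whether $A(\Xi) \neq B(\Xi)$ holds for every valuation $\Xi$. The standard passage from the integer-variable version to the natural-variable version (via, e.g., Lagrange's four-square theorem) is the only piece of background I would rely on beyond Matiyasevich's theorem itself.

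The main idea is the following elementary identity, valid over $\mathbb N$:
\begin{equation*}
A(\Xi) \neq B(\Xi) \;\;\iff\;\; (A(\Xi) - B(\Xi))^2 \geq 1 \;\;\iff\;\; 2 + 2\,A(\Xi)B(\Xi) \;\leq\; 1 + \bigl(A(\Xi)^2 + B(\Xi)^2\bigr).
\end{equation*}
Given an instance $(A,B)$ of Hilbert's tenth problem, I would set $P_s := 2 + 2AB$ and $P_b := A^2 + B^2$. Since the class of polynomials with non-negative integer coefficients is closed under addition and multiplication, each of $P_s$ and $P_b$ can indeed be written as a sum of monomials (each implicitly of coefficient $1$), so they are polynomials in the sense of the paper. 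By the identity above, the condition ``$P_s(\Xi) \leq 1 + P_b(\Xi)$ for every valuation $\Xi$'' is equivalent to ``$A(\Xi) \neq B(\Xi)$ for every $\Xi$'', i.e.\ to the non-existence of a solution of $A=B$. A decision procedure for the former would therefore decide Hilbert's tenth problem, a contradiction.

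The argument is essentially one algebraic trick, and I do not anticipate a real obstacle. The only points that need a line of care are (i) checking that $2+2AB$ and $A^2+B^2$ can be written in the restricted ``sum of coefficient-$1$ monomials'' syntax used in the paper (which they can, by expanding products and repeating monomials according to coefficients), and (ii) invoking the natural-number form of Matiyasevich's theorem rather than the integer form. An alternative, even shorter route would be to quote the undecidability of ``$P_s(\Xi) \leq P_b(\Xi)$ for every $\Xi$'' already implicit in \cite{IR95} and then merely scale both sides by $2$: since $2P_s \leq 2P_b+1$ iff $2P_s \leq 2P_b$ iff $P_s\leq P_b$ over $\mathbb N$, the instance $(2P_s, 2P_b)$ gives the desired reduction to the ``$+1$'' variant.
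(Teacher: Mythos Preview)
Your argument is correct. Both the squaring trick (setting $P_s = 2 + 2AB$, $P_b = A^2 + B^2$ so that $P_s \leq 1 + P_b$ iff $(A-B)^2 \geq 1$ iff $A \neq B$) and the parity trick (replacing an instance $(P_s,P_b)$ of the ``$\leq$'' problem by $(2P_s, 2P_b)$, using that $2P_s \leq 2P_b + 1$ iff $2P_s \leq 2P_b$ over $\mathbb N$) are valid reductions, and your checks that the resulting polynomials can be written in the paper's ``sum of coefficient-$1$ monomials'' syntax are fine.

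There is nothing to compare against: the paper does not prove this Fact. It is stated as a standard variant of Hilbert's tenth problem and used as a black box (the only Hilbert-type fact the paper actually proves is the more elaborate Fact~\ref{source-of-und-dwa}, in Appendix~\ref{app:B}). Either of your reductions would serve as the missing justification.
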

From now on, till the end of this section,  we assume that $ P_s=\sum_{k=1}^{\kkk}\, M_k^s$ and 
$P_b=\sum_{m=1}^{\mmm}\, M_m^b$
 are two fixed 
polynomials.
Now, in order to prove Theorem \ref{th:UCQ-duzo}, we need to construct a
UCQ $\Phi_s$, and a CQ $\phi_b$, such that the two conditions are equivalent:\\
\noindent
\begin{minipage}{0.59\textwidth}
\begin{equation}
             P_s(\Xi) \leq 1+P_b(\Xi) \text{\; holds for every  valuation \;} \Xi\;\; \label{9clubsuit}
\end{equation}    
\end{minipage}
\begin{minipage}{0.12\textwidth}{\color{white} nicin }
\end{minipage}
\begin{minipage}{0.26\textwidth}
    \begin{equation}
             \text{$\Phi_s \dbcontainednt \phi_b $}\label{9spadesuit}
         \end{equation}
\end{minipage}


\vspace{1.5mm}
 Recall that in the context of
 Theorem \ref{th:UCQ-duzo} we only consider non-trivial structures, which satisfy $\mars\neq \venus$.  We are now ready to define our $\Phi_s$ and $\phi_b$. Let
   \[ \Phi_s \;=\; \good \,\wedge\, \Planet(\mars) \,\wedge\,\textstyle \bigvee_{k=1}^{\kkk} (\mars \cansee \mu{M_k^s}) \quad\quad\text{and}\quad\quad\;\; \phi_b\;=\;\cqize( \ucqize{P_b}) \]
  Since conjunction distributes over disjunction, the $\Phi_s$ as defined above is a UCQ having 
  the query  $\good \wedge \Planet(\mars)$ as a subquery in each of its disjuncts. This subquery has no variables, so (as long as it is satisfied)
 it does not affect the results of applying the disjuncts to structures. And thus (recall Lemma \ref{lem:na-mon-ok-local}) $\Phi_s$, applied to some $D$, returns
 the value of the polynomial $P_s$ on Mars.

Once  $\Phi_s$ and $\phi_b$ are defined, our last step is to show that
 $(\ref{9clubsuit}) \Leftrightarrow (\ref{9spadesuit}) $.
To see that $\;\neg(\ref{9clubsuit}) \Rightarrow \neg(\ref{9spadesuit}) $ assume that there exists
valuation $\Xi$
such that
$\ P_s(\Xi) > 1+P_b(\Xi)$.
Take
$\DDD= \planetize(D_\Xi) $ and recall that $\planetize(D_\Xi)$ is defined as a structure with two planets ($\venus$ and $\mars$) such that the
valuation seen from Mars is $\Xi$. So:

\vspace{-4mm}
\begin{center} $\Phi_s\two \DDD = \ P_s(\Xi) > 1+P_b(\Xi) = \phi_b \two \DDD $ \end{center}

\noindent
where the last equality is just a direct application of Lemma \ref{lem:plus-jeden}.

For the $(\ref{9clubsuit}) \Rightarrow (\ref{9spadesuit})$ direction suppose we are given some structure $\DDD$.
If $\DDD\not\models \good \wedge \Planet(\mars)$ then $\Phi_s \two \DDD =0$ and there is nothing to prove. So suppose $\DDD\models \good \wedge \Planet(\mars)$.
\newcommand{\dumboverleafasd}{($\ref{9clubsuit}$)\quad\qed}
\begin{alignat*}{4}
\phi_b \two \DDD
\;\;=\;\;& &&\textstyle\sum_{h\in \trip{\mmm}{}}\;  &&\nawn{\phi_b[h] \two \DDD}&&\tag{\cref{lem:pi-rachunek-sredni}}\\
\;\;\geq\;\;& \nawni{\phi_b[\bar\venus] \two \DDD} \;+\; &&\textstyle\sum_{m=1}^{\mmm}\;  &&\nawn{\phi_b[\bar\mars^m] \two \DDD}&&\tag{$\singleton{\venus,\mars}\subseteq\Planets(\DDD)$}\\
\;\;\geq\;\;& \phantom{aaaaaaaaa,}1 \;+\; &&\textstyle\sum_{m=1}^{\mmm} &&M^b_m(\Xi^{\tiny\mars})&&\\
\;\;\geq\;\;& && \textstyle\sum_{k=1}^{\kkk} &&M^s_k(\Xi^{\tiny\mars})&& \;\;=\;\; \Phi_s \two \DDD\tag*{\dumboverleafasd}
\end{alignat*}

\section{Proof of Theorem \ref{th:cq-cq}} \label{sec:cq-cq}


In the proofs of Theorems \ref{th:tak-samo-trudne} and \ref{th:UCQ-duzo} we never needed to bother about $\trip{}{2\leq}$. When proving the equivalence (\ref{ding80})
(in the case of Theorem \ref{th:tak-samo-trudne}) and  (\ref{9clubsuit})$\Leftrightarrow$(\ref{9spadesuit}) (for Theorem \ref{th:UCQ-duzo})  in the
``easy'' direction, we only needed to consider one structure, which was a marsification, so the set $\trip{}{2\leq}$ was empty there. When proving the ``more difficult'' implication
we (again in both cases) exploited the fact that only our b-query was a CQ-ization of a UCQ. And, for each of these two proofs, in order to show that the result of the application of  the b-query
is always big enough, we could afford  to only consider the trips from $\trip{}{1}$.

Now the situation is going to be different: also our s-query will be a  CQ-ization of a UCQ.
This means that, when proving the $(\ref{10clubsuit}) \Rightarrow  (\ref{10spadesuit}) $ implication (below) we will need to
take into account the  possibility that trips from $\trip{}{2\leq}$ contribute to the result of applying the s-query to a structure.
A new technique is introduced in this section, to deal with this issue.

\subsection{The source of undecidability and the queries $\beta_s$ and $\beta_b$}

 First of all, we are given some $\varepsilon>0$, which is fixed from now on.
We assume that $\varepsilon \leq 1$. This assumption is not crucial,
we could easily live without it, but it will save us some notations.
And anyway, the most interesting values of $\varepsilon$ are the ones just above 0, and 1 which is important for the proof of Corollary \ref{th:UCQ-dwa}.
Let $\ccc=1+\varepsilon$ and $\cccc$ be any rational number such that $\ccc > \cccc > \sqrt{\ccc}$.

As the source of undecidability we are again going to use a version of Hilbert's 10th Problem.

\begin{Fact}\label{source-of-und-dwa}
{\em The following problem is undecidable:}\smallskip

\noindent
Given are
 polynomials, $P_s$ and $P_b$,  such that the inequality 
 $\cccc \cdot  \coef(M,P_s)\leq  \coef(M,P_b) $ holds  for each monomial $M$.
\noindent
Is the inequality  $\ccc \cdot (1+P_s(\Xi)) \leq  1+ P_b(\Xi)$ satisfied for every  valuation $\Xi\;$?
\end{Fact}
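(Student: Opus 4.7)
The plan is to reduce from the standard Diophantine undecidability: given polynomials $R_s,R_b$ with natural-number coefficients, it is undecidable whether $R_s(\Xi)\le R_b(\Xi)$ holds for every valuation $\Xi$ (this is equivalent to \cref{source-of-und} up to a shift by a constant and is an immediate consequence of the negative answer to Hilbert's Tenth Problem). Given such an instance $(R_s,R_b)$, I will produce an instance $(P_s,P_b)$ of \cref{source-of-und-dwa} whose answer coincides. Write $\ccc=p/q$ in lowest terms; since $\ccc>1$ we have $p>q\ge 1$ and $\ccc q=p$. Let $F$ be the polynomial whose coefficient is $1$ on every monomial that occurs in $R_s$, $R_b$, or the constant monomial $\mathbf{1}$, and $0$ elsewhere, and let $K\in\mathbb{N}$ be a parameter to be fixed below. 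I set
\[
 P_s \;=\; qR_s \;+\; qKF, \qquad P_b \;=\; (p-q)R_s \;+\; qR_b \;+\; q \;+\; pKF.
\]
Both polynomials have natural coefficients. The role of $KF$ is to inflate the coefficients of $P_s$ and $P_b$ enough to satisfy the coefficient bound while being carefully balanced so as not to affect the pointwise inequality.

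Substituting into $\ccc(1+P_s(\Xi))\le 1+P_b(\Xi)$ and using $\ccc q=p$, the $pKF(\Xi)$ terms cancel and the inequality reduces to $\ccc+qR_s(\Xi)\le 1+q+qR_b(\Xi)$, equivalently $R_b(\Xi)-R_s(\Xi)\ge \varepsilon/q-1$. Since the left-hand side is an integer and the right-hand side lies in $(-1,0)$, this is equivalent to $R_s(\Xi)\le R_b(\Xi)$. Hence the question of \cref{source-of-und-dwa} on $(P_s,P_b)$ is precisely the original Diophantine question on $(R_s,R_b)$.

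It remains to fix $K$ so that $\cccc\,\coef(M,P_s)\le\coef(M,P_b)$ for every monomial $M$. For any $M$ outside the support of $F$ both coefficients vanish and the bound is trivial. For $M$ in the (finite) support of $F$, writing $a=\coef(M,R_s)$, $b=\coef(M,R_b)$, and $c=[M=\mathbf{1}]$, the desired inequality $\cccc(qa+qK)\le (p-q)a+qb+qc+pK$ rearranges to
\[
(p-\cccc q)\,K \;\ge\; (\cccc q+q-p)\,a \;-\; qb \;-\; qc.
\]
Since $\cccc<\ccc$, the coefficient of $K$ on the left is strictly positive, while the right-hand side is bounded by a constant depending only on $R_s,R_b$ across the finitely many $M$ in the support of $F$. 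Choosing $K$ large enough therefore makes the bound hold uniformly, completing the reduction. The conceptual point -- and the only delicate part -- is that pairing $qR_s$ inside $P_s$ with $pR_s=\ccc qR_s$ inside $P_b$ makes $\ccc P_s$ and $P_b$ agree on the $R_s$-contribution, so that the pointwise question collapses to an integer comparison, while the $KF$ padding takes care of the coefficient-wise condition orthogonally; the condition $\cccc<\ccc$ is exactly what is needed to absorb the padding.
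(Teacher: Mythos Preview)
Your reduction is correct and follows the same overall strategy as the paper: reduce from the standard Diophantine inequality problem by padding with enough copies of the relevant monomials (to force the coefficient condition $\cccc\cdot\coef(M,P_s)\le\coef(M,P_b)$) and scaling by the numerator/denominator of $\ccc$ (so that the padding cancels in the pointwise inequality). Your one-shot construction, with the $(p-q)R_s$ term inside $P_b$ and the balanced $qKF$/$pKF$ padding, is a bit more compact than the paper's three-stage version (pad, then scale, then subtract $1$), but the idea is the same. One cosmetic point: when $\varepsilon=1$ and $q=1$ the quantity $\varepsilon/q-1$ equals $0$, so the interval should be $(-1,0]$ rather than $(-1,0)$; this does not affect the argument, since $R_b(\Xi)-R_s(\Xi)\ge 0$ is still exactly $R_s(\Xi)\le R_b(\Xi)$.
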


For the proof of Fact \ref{source-of-und-dwa} (assuming undecidability of Hilbert's 10th Problem) see \cref{app:B}. 
From now on, till the end of this section,  we assume that 
$P_s=\Sigma_{k=1}^{\kkk}\;M^s_k$ and $P_b=\Sigma_{m=1}^{ \mmm}\;M^b_m$ are two fixed
polynomials,
  as in Fact \ref{source-of-und-dwa}. 
Now, in order to prove Theorem \ref{th:cq-cq}, we need to construct 
conjunctive queries  $\beta_s$, and  $\beta_b$ such that the two conditions are equivalent:

\noindent
\begin{minipage}{0.7\textwidth}
    \begin{equation}
        \ccc \cdot (1+P_s(\Xi)) \leq 1+P_b(\Xi)\;\; \text{holds for every  valuation.}\label{10clubsuit}
    \end{equation}
\end{minipage}
\begin{minipage}{0.299\textwidth}
    \begin{equation}
        \ccc \cdot \beta_s  \dbcontainednt \beta_b\label{10spadesuit}
    \end{equation}
\end{minipage}

\noindent
At this point, it is certainly not going to surprise the careful Reader that we are going to define:
\[ \beta_s \;=\; \good \,\wedge\, \Planet(\mars)\,\wedge\, \cqize( \gammas) \quad\quad\text{and}\quad\quad \beta_b\;=\;\eta_1 \,\wedge\, \cqize(\gammab)\]
where $\ucqize{P}$ is as in Definition \ref{def:poly-to-cq} and  $\eta_1$ is a formula very similar (but not identical) to the $\eta_0$ from Section \ref{sec:tak-samo-trudne}, namely 
$\eta_1= V(\venus, \wildcard) \wedge R(v,v) $.


\subsection{The easy direction: proof that $\neg(\ref{10clubsuit})\Rightarrow \neg(\ref{10spadesuit})$. }

Suppose  $\Xi$ is the valuation which is a counterexample to $(\ref{10clubsuit})$.
We need to produce a  $\DDD$ satisfying 
$\ccc \cdot (\beta_s \two \DDD) \leq \beta_b \two \DDD$.
It is probably obvious at this point, that this $\DDD$
is going to be $\planetize(D_\Xi)$.
Clearly, $\planetize(D_\Xi)\models \good \,\wedge\, \Planet(\mars)$. It is also easy to see 
that $\eta_1 \two \planetize(D_\Xi) =1$, so (by Observation \ref{obs:12}):

\vspace{1mm}
\noindent
We just need to show that: 
\hfill $ \ccc \cdot \nawn{ \cqize ( \gammas) \two    \planetize(D_\Xi)} > \cqize ( \gammab) \two    \planetize(D_\Xi) $

\vspace{1mm}
\noindent
Using Lemma \ref{lem:plus-jeden} this is equivalent to: \hfill
$ \ccc \cdot(1+ \nawn{     \gammas \two   D_\Xi      })> 1+ \nawn{     \gammab \two   D_\Xi      } $

\vspace{1mm}
\noindent
But, by Lemma \ref{lem:poly-reprezentacja}, this is exactly equivalent to $\neg(\ref{10spadesuit})$.

\vspace{1mm}
Proof of the $(\ref{10clubsuit}) \Rightarrow  (\ref{10spadesuit}) $ implication is much harder and  {\bf will occupy the rest of 
this section}. 


\subsection{Very good structures and what if $\DDD$ isn't one}

We now assume  $(\ref{10clubsuit})$  and we fix a structure $\DDD$. 
Our goal is to prove that $\ccc \cdot \nawni{\beta_s \two \DDD} \leq \beta_b \two \DDD$. 

Notice that 
we can assume that $\DDD$ is good and that $D\models \Planet(\mars)$, otherwise $\beta_s \two \DDD=0$ and there is nothing to prove.
Since $\DDD$ is good, $R(\venus, \venus)$ is true in $\DDD$ and $\DDD\models \eta_1$.

Let us define a structure to be 
{\em very good} if it is good, $\venus$-foggy, and such that $\venus$ is the only vertex
satisfying  $R(v,v)$. Then $\eta_1 \two \DDD = 1 $ if $\DDD$ is very good, and  $\eta_1 \two \DDD \geq  2 $
otherwise.

We of course cannot (yet) assume that $\DDD$ is very good.

\begin{lemma}\label{lem:nie-doskoczy}
$\cqize(\gammas)\two \DDD \leq \cqize(\gammab)\two \DDD$
\end{lemma}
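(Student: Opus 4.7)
The plan is to exhibit, for each trip on the $s$-side, a distinct trip on the $b$-side that contributes at least as much to the respective sum. First, from Fact~\ref{source-of-und-dwa} we have $\cccc \cdot \coef(M, P_s) \leq \coef(M, P_b)$ for every monomial $M$; since $\cccc > 1$ and coefficients are nonnegative integers, this implies $\coef(M, P_s) \leq \coef(M, P_b)$. A routine matching argument (inside each ``monomial class'', inject $s$-indices into $b$-indices) then yields an injection $\iota : \{1, \dots, \kkk\} \hookrightarrow \{1, \dots, \mmm\}$ such that $M^s_k = M^b_{\iota(k)}$ for every $k$.

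Next, for every $\kkk$-trip $h \in \trip{\kkk}{}(\DDD)$ define a mapping $\tilde h : \{x_1, \dots, x_\mmm\} \to \vertex(\DDD)$ by setting $\tilde h(x_{\iota(k)}) = h(x_k)$ for each $k$, and $\tilde h(x_m) = \venus$ for every $m \notin \iota(\{1, \dots, \kkk\})$. I would then check that $\tilde h$ is a valid $\mmm$-trip: $\venus$ is a planet (as $\DDD$ is good, so $R(\venus,\venus)$ holds) and every planet is by definition $R$-connected to $\venus$ in both directions, so the images of $\tilde h$ form an $R$-clique. The map $h \mapsto \tilde h$ is injective since $\iota$ is fixed and $h(x_k)$ can be recovered as $\tilde h(x_{\iota(k)})$.

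Applying Lemma~\ref{lem:pi-rachunek-maly} to both sides, the $k$-th factor of $\cqize(\gammas)[h] \two \DDD$ is $\nawn{\mu{M^s_k} \two seen(h(x_k), \DDD)}$, which, since $M^s_k = M^b_{\iota(k)}$ and $h(x_k) = \tilde h(x_{\iota(k)})$, equals exactly the $\iota(k)$-th factor of $\cqize(\gammab)[\tilde h] \two \DDD$. Each remaining factor of the latter product has the form $\nawn{\mu{M^b_m} \two seen(\venus, \DDD)}$ for some $m \notin \iota(\{1,\dots,\kkk\})$. Since $\mu{M^b_m}$ is pleasant (each of its atoms carries the variable $\wildcard$) and $seen(\venus, \DDD)$ is good, Lemma~\ref{lem:pleasant-spelnione} provides the all-$\venus$ homomorphism, so each such factor is at least $1$. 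Hence $\cqize(\gammas)[h] \two \DDD \leq \cqize(\gammab)[\tilde h] \two \DDD$.

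Summing, via Lemma~\ref{lem:pi-rachunek-sredni} and the injectivity of $h \mapsto \tilde h$:
\[
\cqize(\gammas) \two \DDD \;=\; \sum_{h \in \trip{\kkk}{}(\DDD)} \nawn{\cqize(\gammas)[h] \two \DDD} \;\leq\; \sum_{h \in \trip{\kkk}{}(\DDD)} \nawn{\cqize(\gammab)[\tilde h] \two \DDD} \;\leq\; \cqize(\gammab) \two \DDD,
\]
which is the desired inequality. The only step deserving any real care is verifying that $\tilde h$ is a trip (the $R$-clique condition), which is immediate from the goodness of $\DDD$; everything else is bookkeeping on products. Observe that the construction does not exploit the factor $\cccc$, reflecting the fact that this lemma only gives plain containment, while the $\ccc$-gap needed for Theorem~\ref{th:cq-cq} will have to be extracted elsewhere in the argument (presumably by exploiting the stronger assumption $\cccc \cdot \coef(M, P_s) \leq \coef(M, P_b)$ to find many extra $\tilde h$'s beyond the image of the injection, or through the $\eta_1$ factor when $\DDD$ is not very good).
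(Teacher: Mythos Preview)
Your proof is correct and uses essentially the same idea as the paper's: since $\coef(M,P_s)\leq\coef(M,P_b)$ for every monomial, one can match each $s$-disjunct with a $b$-disjunct carrying the same monomial and then inject homomorphisms by sending every ``extra'' variable to $\venus$. The only organizational difference is that you route the argument through the trip decomposition and Lemma~\ref{lem:pi-rachunek-maly} (comparing products factor by factor via $\iota$), whereas the paper builds the injection directly at the level of full homomorphisms $Hom(\cqize(\gammas),\DDD)\hookrightarrow Hom(\cqize(\gammab),\DDD)$, observing that up to reordering $\cqize(\gammab)=\cqize(\gammas)\wedge\alpha$ and extending each $h$ by $\venus$ on $var(\alpha)$.

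One small point worth tightening: your appeal to Lemma~\ref{lem:pleasant-spelnione} for ``$seen(\venus,\DDD)$ is good'' is correct here only because in this section $\Sigma=\{X_1,\ldots,X_\nnn\}$ has no constants, so every atom of $\good$ mentions $\venus$ alone and survives the restriction to $seen(\venus,\DDD)$. It would be cleaner to argue directly that $\DDD\models X_i(\venus)$ and $\DDD\models V(\venus,\venus)$, hence the all-$\venus$ map satisfies $\mu{M^b_m}$ in $seen(\venus,\DDD)$.
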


\myproof
Since the inequality $\cccc \cdot  \coef(M,P_s)\leq  \coef(M,P_b) $ holds for each monomial $M$, we can 
imagine, w.l.o.g. that  $\cqize(\gammab) = \cqize(\gammas) \wedge \alpha$, for a CQ  $\alpha$.

Such $\alpha$ will be the  conjunction of conjunctive queries  $\cqize(\mu{M})$, for 
the (occurrences of) monomials $M$ which are in $ P_b$ but {their coefficients}
``stick out, above {those in} $P_s$'' (recall that  $\cccc \cdot  \coef(M,P_s)\leq  \coef(M,P_b) $), and of atoms 
of relation $R$
forcing the aliens in $\alpha$ to form an $R$-clique together with the aliens in  $\cqize(\gammas)$.

Now, consider a function $F:Hom(\cqize(\gammas), \DDD )\rightarrow Hom(\cqize(\gammab), \DDD )  $ defined as:
$$(H(h))(v)=h(v) \;\text{if}\; v\in var(\cqize(\gammas))\; \text{and} \;(H(h))(v)=\venus \;\text{otherwise}  $$
Such $H$ is a 1-1 mapping, which ends the proof of the lemma.
\qed

Suppose $\DDD$ is not very good. Then $\eta_1(\DDD)\geq 2$. Now recall that $\ccc\leq 2$
and use Lemma \ref{lem:nie-doskoczy} to see that in this case  $\ccc \cdot \nawn{\beta_s \two \DDD}$ indeed cannot be greater than $\beta_b \two \DDD$.

So from now on we can, and will, assume that $\DDD$ is very good. And we need to prove, under this
assumption, that:
\vspace{-2mm}
\begin{equation}
\ccc \cdot \nawn{\cqize(\gammas)\two \DDD} \;\;\leq \;\;\cqize(\gammab)\two \DDD \label{ding93}
\end{equation}

\subsection{A lemma about two sorts of trips}

We know, from Lemma \ref{lem:pi-rachunek-sredni}, that:
$$\cqize(\gammas)\two \DDD\; = 
\sum\limits_{h\in \trip{\kkk}{}} \nawni{\cqize(\gammas)[h] \two \DDD}\;\; \text{and} 
\;\;\cqize(\gammab)\two \DDD \;=
\sum\limits_{h\in \trip{\mmm}{}} \nawni{\cqize(\gammab)[h] \two \DDD}$$

\noindent
So, the implication $(\ref{10clubsuit}) \Rightarrow  (\ref{10spadesuit}) $ will be proven once we can show the following: 


\begin{lemma}\label{lem:wycieczki}
    \begin{alignat*}{5}
\ccc \;\cdot\;& \sum_{h\in \trip{\kkk}{1}\cup\{\tiny \bar\venus\}}&&\nawn{\cqize(\gammas)[h] \two \DDD} 
\quad&&\leq\quad 
\sum_{h\in \trip{\mmm}{1}\cup\{\tiny \bar\venus\}} &&\nawn{\cqize(\gammab)[h] \two \DDD}\\
%
\ccc \;\cdot\;& \;\;\sum_{h\in \trip{\kkk}{2\leq }}&&\nawn{\cqize(\gammas)[h] \two \DDD} 
\quad&&\leq\quad \;\;\,\sum_{h\in\trip{\mmm}{2\leq }} &&\nawn{\cqize(\gammab)[h] \two \DDD}\\
\end{alignat*}
\end{lemma}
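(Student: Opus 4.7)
The plan is to reduce both sides to sums over planet-configurations by unfolding with Lemma~\ref{lem:pi-rachunek-maly}. Since $\DDD$ is very good, $seen(\venus,\DDD)$ contains only $\venus$, so $\Xi^{\venus}(\xxi_n)=1$ for every $n$, and therefore $M(\Xi^{\venus})=1$ for every monomial $M$. Consequently, for every trip $h\in\trip{\kkk}{}$,
\[
\cqize(\gammas)[h]\two\DDD \;=\; \prod_{k=1}^{\kkk}\, M^s_k(\Xi^{h(x_k)}),
\]
with the factors for aliens staying on $\venus$ contributing $1$; an analogous identity holds for $\cqize(\gammab)$.

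For part (1), the trip $\bar\venus$ contributes $1$ on each side, while each $\bar p^k\in\trip{\kkk}{1}$ contributes $M^s_k(\Xi^p)$, and summing over $k$ yields $P_s(\Xi^p)$. Thus the left-hand side equals $1+\sum_{p\in\Planetsbezv}P_s(\Xi^p)$ and the right-hand side equals $1+\sum_{p\in\Planetsbezv}P_b(\Xi^p)$. Applying the hypothesis of Fact~\ref{source-of-und-dwa} at each valuation $\Xi^p$ gives $\ccc(1+P_s(\Xi^p))\leq 1+P_b(\Xi^p)$, and a straightforward rearrangement shows that summing these yields (1) provided $|\Planetsbezv|\geq 1$. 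This holds because $\mars\in\Planetsbezv$: we may assume $\DDD\models\Planet(\mars)$ (otherwise $\beta_s\two\DDD=0$ and there is nothing to prove), and $\mars\neq\venus$ by non-triviality of $\DDD$.

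For part (2), very-goodness (no $R$-self-loops outside $\venus$) forces the travelling aliens of any trip $h\in\trip{\kkk}{2\leq}$ to map injectively to an $R$-clique $P\subseteq\Planetsbezv$ of size at least $2$. Reindexing by $P$, both sides decompose as $\sum_P S(P)$ and $\sum_P S_b(P)$, where
\[
S(P) \;=\; \sum_{g\colon P\hookrightarrow\{1,\dots,\kkk\}} \prod_{p\in P} M^s_{g(p)}(\Xi^p), \qquad S_b(P) \;=\; \sum_{g\colon P\hookrightarrow\{1,\dots,\mmm\}} \prod_{p\in P} M^b_{g(p)}(\Xi^p).
\]
Grouping injections by their monomial signature $(\mu_p)_{p\in P}$ and writing $c_\mu=|\{p:\mu_p=\mu\}|$, $s_\mu=\coef(\mu,P_s)$, $b_\mu=\coef(\mu,P_b)$, the number of injections with a given signature equals $\prod_\mu (s_\mu)_{c_\mu}$ (resp.\ $\prod_\mu (b_\mu)_{c_\mu}$), because the index pools for distinct monomials are disjoint; here $(n)_c$ denotes the falling factorial.

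The combinatorial core is then: whenever $b_\mu\geq\cccc s_\mu\geq s_\mu$, then for every $0\leq i<c_\mu\leq s_\mu$ one has $(b_\mu-i)/(s_\mu-i)\geq b_\mu/s_\mu\geq\cccc$, so $(b_\mu)_{c_\mu}/(s_\mu)_{c_\mu}\geq\cccc^{c_\mu}$. Multiplying over $\mu$ and using $\sum_\mu c_\mu=|P|\geq 2$ together with $\cccc>\sqrt{\ccc}$ yields $\prod_\mu(b_\mu)_{c_\mu}\geq\cccc^{|P|}\prod_\mu(s_\mu)_{c_\mu}\geq\ccc\prod_\mu(s_\mu)_{c_\mu}$, hence $S_b(P)\geq\ccc\cdot S(P)$ for every admissible $P$, and summing gives (2). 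The main obstacle, exactly as flagged in Section~\ref{sec:cqizacja} before this lemma, is this multi-trip case: the careful choice $\ccc>\cccc>\sqrt{\ccc}$ is precisely what lets a product of $|P|\geq 2$ $\cccc$-factors beat the $\ccc$-blow-up, whereas in part (1) the additive $+1$ in Fact~\ref{source-of-und-dwa} supplies the slack needed when $|\Planetsbezv|$ may be as small as $1$.
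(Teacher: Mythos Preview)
Your proposal is correct and follows essentially the same approach as the paper. For part~(1) you unfold the single-trip sums into $1+\sum_{p\in\Planetsbezv}P(\Xi^p)$ and then sum the hypothesis $\ccc(1+P_s(\Xi^p))\leq 1+P_b(\Xi^p)$ over planets, using $|\Planetsbezv|\geq 1$ (via $\mars$) to absorb the mismatch between $\ccc\cdot 1$ and $1$; this is exactly the argument of Appendix~\ref{app:wycieczki}. For part~(2) you decompose by the destination set $P$ (the paper's $A$) and then by the monomial signature $(\mu_p)$ (the paper's $\tau$), reduce to comparing counts of injections via falling factorials, and use $b_\mu\geq\cccc\,s_\mu$ together with $|P|\geq 2$ and $\cccc^2>\ccc$; this is precisely Lemmas~\ref{lem:rowne-rrr} and~\ref{lem:kombinatoryczny} (Appendices~\ref{app:D}--\ref{app:E}). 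Your ratio bound $(b_\mu-i)/(s_\mu-i)\geq\cccc$ is a slightly cleaner way to phrase the paper's step $(\cccc s_\mu - i)\geq\cccc(s_\mu-i)$, but the content is identical.
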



\vspace{-3mm}
\noindent
Proof of \cref{lem:wycieczki}.1  is not much different than 
proof of the $(\ref{9clubsuit}) \Rightarrow (\ref{9spadesuit})$ direction in Section 
\ref{sec:drugi-dowod}: for each $p\in\Planetsbezv(\DDD)$ separately we 
exploit the assumption that 
 $\ccc \cdot (1+P_s(\Xi^p)) \leq 1+P_b(\Xi^p)$.
 See \cref{app:wycieczki} for details.

\subsection{Proof of Lemma \ref{lem:wycieczki}, the case of $\trip{}{2\leq}$}

Let $A\subseteq \Planetsbezv$, and let
$\trip{\jjj}{A}$ denote $\set{h\in \trip{\jjj}{2\leq} \mid Im(h)\cup \singleton{\venus} = A\cup \singleton{\venus}}$. 
In words, a trip $h$ is in the set $\trip{\jjj}{A}$  if  $A$ is exactly the set of destination planets (not including $\venus$) to which $h$ maps the aliens.

\vspace{1mm}
Clearly, $\trip{\mmm}{2\leq}=\bigcup_{A\subseteq \Planetsbezv; |A|\geq 2}\; \trip{\mmm}{A}  \; $ and if $A\neq A'$ then the sets $\trip{\mmm}{A} $ and $\trip{\mmm}{A'} $ are disjoint. 
And the analogous statement is true about $\trip{\kkk}{2\leq}$.
Therefore,
in order to finish the proof it will be enough to show that for each set of destination planets
$A$, such that  $ |A|\geq 2$, it holds that:
\begin{align}
\ccc  \cdot\; \;\;\sum_{h\in \trip{\kkk}{A}} \;\;\nawn{\cqize(\gammas)[h] \two \DDD} \quad\leq\quad 
\;\;\sum_{h\in \trip{\mmm}{A}}\;\; \nawn{\cqize(\gammab)[h] \two \DDD } \label{ding96}
\end{align}

Let us now fix a {set of planets} $A$ as specified above.

Recall that $\DDD$ is very good, so $\venus$ is the only planet for which 
$\DDD\models R(v,v)$ holds and, in consequence, if $h \in\trip{\mmm}{A}$ 
(or if $h \in\trip{\kkk}{A}$)
then,
for each  $p\in A$,  exactly one alien  is mapped to $p$ by $h$. This leads to:
%

\begin{definition}
Let $h \in\trip{\jjj}{A}$. We define $\bar h:A\rightarrow \{1,2,\ldots \jjj\}$ 
as:
$\;\bar h(p) = j\;$ if and only if $\;h(x_j) = p$.
\end{definition}

\noindent
We think of $\bar h$ as of $h$ seen from the perspective of the host planets.
Unlike $h$ it does not answer the question ``where am I going?''. It  answers, for each planet in $A$,
the question ``who is coming here?''.
But the the question a planet really wants to know the answer to is not ``who is coming here?'' 
but (recall that each alien travels with her own query to evaluate, and this query is 
$\mu{M}$  for some monomial $M$)  ``which monomial is coming here?''. This motivates:

\begin{definition} Recall that that $M^s_k$ is the $k$-th monomial of $P_s$  and 
 $M^b_m$ is the $m$-th monomial of $P_m$.

For $h \in\trip{\mmm}{A}$ (or $h \in\trip{\kkk}{A}$) we define $\hat h:A\rightarrow \MMM$ as:
$\hat h(p)=M^b_{\bar h(p)}$  (or, respectively, $\hat h(p)=M^s_{\bar h(p)}$).
\end{definition}

Now, the inequality (\ref{ding96})  can be equivalently rewritten as:

\begin{align*}
\ccc \cdot \sum_{\tau:A\rightarrow \MMM} \sum_{h\in \trip{\kkk}{A}, \hat h=\tau} \bignawn{\cqize(\gammas)[h] \two \DDD}
\leq
\sum_{\tau:A\rightarrow \MMM}  \sum_{h\in \trip{\mmm}{A}, \hat h=\tau} \bignawn{\cqize(\gammab)[h] \two \DDD}
\end{align*}

In order to prove this inequality, it will be of course enough, to show that for
each $\tau:A\rightarrow \MMM$:
\begin{align}
\ccc \cdot  \sum_{h\in \trip{\kkk}{A}, \hat h=\tau} \bignawn{\cqize(\gammas)[h] \two \DDD}
\leq
 \sum_{h\in \trip{\mmm}{A}, \hat h=\tau}
 \cqize(\gammab)[h] \two \DDD \label{ding94}
\end{align}

So let us now fix such $\tau$ and see what happens.
\begin{definition}
    Let: 
    $$\rrr \;=\; \prod_{p\in A} \nawn{(p \cansee \mu{\tau(p)}) \two D}\quad\quad
    \ttt_s \;=\; |\{h\in \trip{\kkk}{A}\; : \: \hat h=\tau  \}|\quad\quad
    \ttt_b \;=\; |\{h\in \trip{\mmm}{A}\; : \: \hat h=\tau  \}|.$$
\end{definition}

Now recall how $\cqize(\gammas)[h_s] \two \DDD$ is calculated 
(or $\cqize(\gammab)[h_b] \two \DDD$). For each planet $p$ of $A$ the monomial $\hat h_s(p)$
is evaluated at $p$. All the remaining monomials of $P_s$ are evaluated at $\venus$, but such evaluation 
always returns 1. Then all the results of the evaluations are multiplied. It does not matter which alien
went to which $p$ as long as they traveled with the same monomial. And it does not matter how many 
aliens stayed at home, since they all return 1 anyway. This\footnote{For full proof, see \cref{app:D}.} leads to:

\begin{lemma}\label{lem:rowne-rrr}
Take any two trips $h_s \in \trip{\kkk}{A}$ and  $h_b\in \trip{\kkk}{A}$ such that 
$\hat h_s = \hat h_b = \tau$ then:
$$\bignawn{\cqize(\gammas)[h_s] \two \DDD} \;=\; \rrr \;=\; \bignawn{\cqize(\gammab)[h_b] \two \DDD}.$$
\end{lemma}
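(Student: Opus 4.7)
The plan is to compute both $\nawn{\cqize(\gammas)[h_s] \two \DDD}$ and $\nawn{\cqize(\gammab)[h_b] \two \DDD}$ by directly applying \cref{lem:pi-rachunek-maly}, and observe that in both cases the calculation collapses to the same quantity $\rrr$. The guiding idea is that, once the trip $h$ is fixed, the contribution of each alien is determined solely by \emph{which} monomial she is carrying and \emph{which} planet she is visiting; no other data about $h$ or about whether she comes from $\gammas$ or $\gammab$ matters.

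First I would apply \cref{lem:pi-rachunek-maly} to get
\[
\nawn{\cqize(\gammas)[h_s]\two \DDD} \;=\; \prod_{k=1}^{\kkk}\, \nawn{\mu{M^s_k}\two seen(h_s(x_k),\DDD)},
\]
and split the product according to whether $h_s(x_k)=\venus$ or $h_s(x_k)\in A$. Since $\DDD$ is very good, it is $\venus$-foggy, and each $\mu{M^s_k}$ is a pleasant CQ (all its atoms are of the form $X_n(\wildcard)$, hence contain a variable). Combining \cref{lem:widzenie} with \cref{lem:czarna-dziura} therefore gives $\nawn{\mu{M^s_k}\two seen(\venus,\DDD)}=1$, so aliens staying at home contribute trivial factors. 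For aliens travelling to a planet $p\in A$, very-goodness also ensures that $\venus$ is the only element satisfying $R(v,v)$, so inside any $R$-clique distinct aliens must go to distinct planets; consequently there is exactly one index $k=\bar h_s(p)$ with $h_s(x_k)=p$, and by definition $M^s_k = \hat h_s(p) = \tau(p)$.

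The product then telescopes to
\[
\nawn{\cqize(\gammas)[h_s]\two \DDD} \;=\; \prod_{p\in A}\, \nawn{\mu{\tau(p)}\two seen(p,\DDD)} \;=\; \prod_{p\in A}\, \nawn{(p\cansee \mu{\tau(p)})\two \DDD} \;=\; \rrr,
\]
where the middle equality is a further use of \cref{lem:widzenie}. Repeating verbatim the same argument for $h_b$ and $\gammab$ (the monomials now being drawn from $P_b$, but the value $\hat h_b(p)=\tau(p)$ is the same) yields $\nawn{\cqize(\gammab)[h_b]\two \DDD}=\rrr$ as well, finishing the proof.

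I do not foresee any real obstacle here: the only subtlety is the careful use of pleasantness together with $\venus$-fogginess to kill the ``stay-at-home'' factors, and the use of very-goodness to exclude two aliens sharing a non-$\venus$ destination; both have already been isolated earlier in the paper.
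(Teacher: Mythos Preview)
Your proof is correct and follows essentially the same route as the paper's own argument in Appendix~D: factor via \cref{lem:pi-rachunek-maly}, kill the stay-at-home factors using $\venus$-fogginess and pleasantness, and use very-goodness to get a bijection between $A$ and the travelling aliens so that the remaining product is exactly $\rrr$. The only cosmetic difference is that the paper unfolds the definition of $\cqize$ and applies Observation~\ref{obs:12} at the end, whereas you invoke \cref{lem:pi-rachunek-maly} up front to obtain the product form directly.
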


Directly from Lemma \ref{lem:rowne-rrr} we get that:
$$ \sum_{h\in \trip{\kkk}{A},\; \hat h=\tau} \bignawn{\cqize(\gammas)[h] \two \DDD}  \quad=\quad \ttt_s\cdot \rrr \quad\;\;\text{and}\;\;\quad 
 \sum_{h\in \trip{\mmm}{A},\; \hat h=\tau} \bignawn{\cqize(\gammab)[h] \two \DDD}  \quad=\quad \ttt_b\cdot \rrr.$$
This means that, in order to prove (\ref{ding94}) we just need to show that:
\begin{lemma}\label{lem:kombinatoryczny}
    $\ccc\cdot \ttt_s \leq \ttt_b $
\end{lemma}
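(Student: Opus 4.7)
The plan is to prove Lemma \ref{lem:kombinatoryczny} by writing $\ttt_s$ and $\ttt_b$ explicitly as products of falling factorials indexed by monomials, and then performing a term-by-term comparison using the assumption $\cccc\cdot\coef(M,P_s)\leq\coef(M,P_b)$ from \cref{source-of-und-dwa}, together with the crucial fact that $|A|\geq 2$ and $\cccc>\sqrt{\ccc}$.

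First, I would partition $A$ according to the monomial demanded. For each monomial $M\in\MMM$ set $A_M = \tau^{-1}(M)$, so the $A_M$ partition $A$. A trip $h\in\trip{\kkk}{A}$ with $\hat h=\tau$ is determined by choosing, for each $M\in\tau(A)$, an \emph{injective} assignment from $A_M$ into the set of indices $\{k : M^s_k=M\}$ (which has size $\coef(M,P_s)$), and then sending the remaining aliens to $\venus$. This gives the closed form
\[
\ttt_s \;=\; \prod_{M\in\tau(A)} \frac{\coef(M,P_s)!}{(\coef(M,P_s)-|A_M|)!},\qquad \ttt_b \;=\; \prod_{M\in\tau(A)} \frac{\coef(M,P_b)!}{(\coef(M,P_b)-|A_M|)!},
\]
with the convention that a factor equals $0$ when $|A_M|>\coef(\cdot,\cdot)$. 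If some $|A_M|>\coef(M,P_s)$ then $\ttt_s=0$ and the lemma is trivial, so assume $|A_M|\le\coef(M,P_s)$ for every $M$.

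The next step is a per-factor comparison. From $\cccc\cdot\coef(M,P_s)\le\coef(M,P_b)$ and $\cccc>1$ I get, for every $0\le i<|A_M|$,
\[
\coef(M,P_b)-i \;\geq\; \cccc\cdot\coef(M,P_s)-i \;\geq\; \cccc\bigl(\coef(M,P_s)-i\bigr),
\]
the second inequality being equivalent to $(\cccc-1)i\geq 0$. Multiplying these inequalities over $i=0,\dots,|A_M|-1$ and then over $M\in\tau(A)$ yields
\[
\ttt_b \;\geq\; \Bigl(\prod_{M\in\tau(A)} \cccc^{|A_M|}\Bigr)\cdot \ttt_s \;=\; \cccc^{|A|}\cdot \ttt_s.
\]

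Finally I invoke the assumption that we are in the $\trip{}{2\le}$ case, so $|A|\geq 2$, and the choice $\cccc>\sqrt{\ccc}$, giving $\cccc^{|A|}\geq \cccc^2>\ccc$ and hence $\ttt_b\geq\ccc\cdot\ttt_s$. The main obstacle — and the whole reason for the careful choice of $\cccc$ strictly between $\sqrt{\ccc}$ and $\ccc$ in the preliminary setup — is precisely to convert the per-monomial linear factor $\cccc$ into an overall factor of $\ccc$ using only the assurance $|A|\ge 2$; this is what makes the $\trip{}{2\le}$ trips, a priori the most dangerous ones (aliens cooperating in groups), actually the easiest to control once one exploits the combinatorial multiplicativity.
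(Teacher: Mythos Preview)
Your proof is correct and follows essentially the same route as the paper's: partition $A$ by the monomial $\tau$ assigns, express $\ttt_s$ and $\ttt_b$ as products of falling factorials, compare term-by-term via $\coef(M,P_b)-i\geq\cccc(\coef(M,P_s)-i)$, and finish with $\cccc^{|A|}\geq\cccc^2>\ccc$. You even add the explicit handling of the degenerate case $\ttt_s=0$, which the paper leaves implicit.
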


\noindent
How do we prove Lemma \ref{lem:kombinatoryczny} ? Here is the idea: if a planet $p$ already knows 
which monomial should be coming there, there are at least $\cccc$ times more ways of selecting an alien from 
$\cqize(\gammab)$ who owns  such monomial than an alien from 
$\cqize(\gammas)$ who owns this monomial. Since there are least two planets in $A$, there are 
least $\cccc\cdot\cccc>\ccc$ more ways. For a more detailed proof see \cref{app:E}.

 \newpage

\bibliographystyle{abbrv}
\bibliography{__main}

\appendix

\newpage
\section{Proof of Lemma \ref{lem:plus-jeden}}\label{app:X}

All the $\jjj$-trips in this proof are trips to $\planetize(D)$, so we write
$\trip{\jjj}{}$ instead of $ \trip{\jjj}{}(\planetize(D))$.

\noindent
We know, from Lemma \ref{lem:pi-rachunek-sredni}  that:
$\cqize(\Phi) \two \planetize(D) \;=\;  \sum_{h\in \trip{\jjj}{}}\; \nawni{\cqize(\Phi)[h] \two \planetize(D)}$.

\noindent
Notice that $\planetize(D)$ is $\venus$-foggy. 
From Lemmas \ref{lem:czarna-dziura} and  \ref{lem:pi-rachunek-maly} this implies that: 
$\cqize(\Phi)[\bar \venus] \two \planetize(D)\;=\;1$.

\noindent
Notice also that $\mars$ is the only non-$\venus$ planet of $\planetize(D)$
so $\trip{\jjj}{}=\{\bar\venus\}\cup \trip{\jjj}{1}$ and
  $\trip{\jjj}{1}=\{\bar\mars^j: 1\leq j\leq \jjj \}$ (recall that $\bar\mars^j$ is the trip  sending $x_j$ to Mars and keeping everyone else on Venus).
Thus:  
$$\textstyle\sum_{h\in \trip{\jjj}{}}\;  \nawni{\cqize(\Phi)[h] \two \planetize(D)} \;=\; 1+\sum_{h\in \trip{\jjj}{1}}\;  \nawni{\cqize(\Phi)[h] \two \planetize(D)}
\;=\;
1+\sum_{j=1}^{\jjj} \nawni{\cqize(\Phi)[\bar \mars^j] \two \planetize(D)}.$$
Now we just need to show that:\;
$ \sum_{j=1}^{\jjj} \nawni{\cqize(\Phi)[\bar \mars^j] \two \planetize(D)} \;=\; \Phi \two D$. 

\noindent
But $\Phi \two D = \Sigma_{j=1}^{\jjj}\nawni{\phi_j \two D}$, so it will be enough to prove that for each $1\leq j \leq \jjj$:
\begin{align*}
    \cqize(\Phi)[\bar \mars^j] \two \planetize(D)  \;\;=\;\; \phi_j \two D
\end{align*}
\noindent
Recall, by Lemma \ref{lem:pi-rachunek-maly}:\;\; $$\textstyle\cqize(\Phi)[\bar \mars^j] \two \planetize(D) \;=\; \nawn{\phi_j \two seen(\mars,\planetize(D))} \cdot \prod_{i\neq j} \nawn{\phi_i \two seen(\venus,\planetize(D) )}$$

\medskip
And  (by Lemma \ref{lem:czarna-dziura}):
$  \prod_{i\neq j} \;\nawn{\phi_i \two seen(\venus,\planetize(D) )}\;=\; 1$.
So what remains for us to prove is that 
$\phi_j \two D \;=\;  \phi_j \two seen(\mars,\planetize(D))  $. 
However, we know that $D= seen(\mars,\planetize(D))$. Why is it so? It clearly follows from the definition of $\planetize(D)$ that each atomic formula of $D$ is indeed true in $seen(\mars,\planetize(D))$. But to see that also the opposite 
inclusion holds, one needs to recall (from  Definition \ref{def:good}) that  formula $\good$, 
 adds, to $\planetize(D)$, some atomic facts not present in $D$. Each of these facts however contains $\venus$ as one of its arguments\footnote{Here is the reason why formula $\Atoms$ needed to be defined in such a weird way and, in consequence, why we needed the notion of pleasant queries.} And, in consequence, none of this facts  can be seen from $\mars$.
\qed


\newpage
\section{The $(\ref{eq:7s})\Rightarrow (\ref{eq:7h})$ part of the proof of Theorem \ref{th:tak-samo-trudne}}\label{app:A}

In order to prove this implication we need, for given 
CQ $\psi_s$ and a UCQ
$\Psi_b$ construct a pleasant 
CQ $\psi'_s$ and a pleasant UCQ
$\Psi'_b$ such that:
\begin{align}
\psi_s \dbcontained \Psi_b \quad\iff\quad \psi'_s \dbcontained \Psi'_b \label{ding44}
\end{align}
Suppose $\Sigma$ is the schema of $\psi_s$ and $\Psi_b$. Define new
schema $\Sigma'$ as follows: if $R$ is an arity $i$ relation in $\Sigma$
then there is a relation $R'$, of arity $i+1$ in $\Sigma'$.
To present an atom of $R'$, instead of $R'(a_0,a_1,\ldots a_i)$ we will write $R'(a_0)(a_1,\ldots a_i)$ and we will read it as 
``$a_0$ believes that $R(a_1,\ldots a_i)$ is true''.

Now, for a CQ $\phi=\bigwedge_{i\in I} P_i(\bar y_i)$ over $\Sigma$ define 
the query $\phi'$ over $\Sigma'$ as 
$\bigwedge_{i\in I} P'_i(x)(\bar y_i)$ where $x$ is a new variable, that is
$x\not\in var(\phi)$. Finally, for a 
UCQ $\Phi=\bigvee_{i\in I} \phi_i$ define  $\Phi'$ as 
$\bigvee_{i\in I} \phi'_i$.

In this way, we have defined our $\psi'_s $ $ \Psi'_b$. 
Clearly, they are both pleasant. Before we show that (\ref{ding44}).
holds, let us introduce two new notations.
\begin{itemize}
    \item For a structure $D$ over $\Sigma'$, and for $c\in \vertex(D)$, we define
the structure  $D_c$ over $\Sigma$ as follows:
$$D_c\models R(\bar a) \iff D\models R'(c)(\bar a)$$
    \item For a structure $D$ over $\Sigma$, and for any $c$, we define
the structure  $D^c$ over $\Sigma'$ as follows:
$$D^c\models R'(c)(\bar a) \iff D\models (\bar a)$$
\end{itemize}
Let now $\phi$ be a CQ over $\Sigma$ and let $D$ be a structure over $\Sigma'$
then it is easy to see that:
$$\nawn{\phi' \two D} \quad=\quad \Sigma_{c\in \vertex(D)}\, \nawn{\phi \two D_c}$$
From this we have that if $\Phi=\bigvee_{i\in I} \phi_i$ is a UCQ over $\Sigma$ and $D$ 
is a structure  over $\Sigma'$ then:
\begin{alignat*}{3}
   \nawn{\Phi' \two D} \quad=\quad& \sum_{\phantom{aa}i\in I\phantom{aa}} &&\nawn{\phi'_i \two D}\\
   \quad=\quad& \sum_{\phantom{aa}i\in I\phantom{aa}}  \sum_{c\in \vertex(D)} &&\nawn{\phi \two D_c}\\
   \quad=\quad& \sum_{c\in \vertex(D)} \sum_{\phantom{aa}i\in I\phantom{aa}}\; &&\nawn{\phi \two D_c}\\
   \quad=\quad& \sum_{c\in \vertex(D)} &&\nawn{\Phi \two D_c}. 
\end{alignat*}

\medskip
\noindent
(\ref{ding44}); $(\neg \Rightarrow \neg)$.\quad
Suppose $\psi_s \not\dbcontained \Psi_b$.
Then there exists $D$ such that $\nawn{\psi_s \two D} > \nawn{\Psi_b \two D}$.
Take~any~$c\in\vertex(D)$. Then  $\nawn{\psi'_s \two D^c} > \nawn{\Psi'_b \two D^c}$ and 
hence  $\psi'_s \not\dbcontained \Psi'_b$.

\medskip
\noindent
(\ref{ding44}); $(\phantom{\neg} \Rightarrow \phantom{\neg})$.\quad
Assume that  $\psi_s \dbcontained \Psi_b$.
Take any structure $D$ over $\Sigma'$. We know from the assumption that 
for each $c\in\vertex(D)$ it holds that  $\nawn{\psi_s \two D_c} \leq \nawn{\Psi_b \two D_c}$.
So, $\Sigma_{c\in \vertex(D)} \nawn{\psi_s \two D_c} \leq  \Sigma_{c\in \vertex(D)} \nawn{\Psi_b \two D_c}$. Which implies that $\nawn{\psi'_s \two D} \leq \nawn{\Psi'_b \two D}$. \qed

\newpage

\section{Proof of Lemma \ref{lem:niewiemco-wenus}}\label{app:Z}

 First of all recall that:
\begin{alignat*}{1}
    (\venus\cansee \psi_s) \two \DDD \;\;&=\;\;
    \psi_s\two\ seen(\venus,\DDD)\tag{Lemma \ref{lem:widzenie}}\\
    \gamma_b[\bar \venus] \two \DDD \;\;&=\;\;  \nawn{\eta_0 \two \DDD}\;\cdot\; \textstyle\prod_{m=1}^{\mmm}\; \nawn{\phi_m\two seen(\venus,\DDD)}\tag{Observation \ref{obs:12}}
\end{alignat*}

\vspace{2.5mm}
Now let us show that:\hspace{5mm}
$\psi_s\two\ seen(\venus,\DDD) \leq  \nawn{\eta_0 \two \DDD}\;\cdot\;  \textstyle\prod_{m=1}^{\mmm}\; \nawn{\phi_m\two seen(\venus,\DDD)}$

\vspace{2.5mm}
There are two cases. Either {\bf (1)} $\DDD$ is $\venus$-foggy or {\bf (2)} it is not.

\smallskip
\noindent
  If {\bf (1)}, then: \, $\psi_s\two\ seen(\venus,\DDD)=1$ (by Lemma \ref{lem:czarna-dziura}) and
$\nawni{\eta_0 \two \DDD} \cdot \prod_{m=1}^{\mmm}\; \nawni{\phi_m\two seen(\venus,\DDD)}\geq 1$.

\noindent
 If   {\bf (2)}, then  $\eta_0 \two D\geq \mmm$.
So we only need to show that
$\psi_s\two\ seen(\venus,\DDD) \leq  \mmm\cdot \prod_{m=1}^{\mmm} \nawni{\phi_m\two seen(\venus,\DDD)}$.
We also know that for each $m$ it holds that $\phi_m\two seen(\venus,\DDD)\geq 1$.
Therefore: \vspace{-2mm}

\begin{alignat*}{3}
    \psi_s\two seen(\venus,\DDD) &\;\;\leq \;\;&&\nawn{\Psi_m&&\two seen(\venus,\DDD)}\tag{Assumption $\psi_s \bagcontained \Psi_m$}\\
    &\;\;=\;\; \textstyle\sum_{m=1}^{\mmm}\;&&\nawn{ \phi_m&&\two seen(\venus,\DDD)}\tag{Definition of $\Psi_b$}\\
    &\;\;\leq\;\; \mmm \cdot \textstyle\prod_{m=1}^{\mmm}\; &&\nawn{\phi_m&&\two seen(\venus,\DDD)}\tag*{\;\qed}
\end{alignat*}

\newpage
\section{Proof of Fact \ref{source-of-und-dwa}}\label{app:B}

Recall we have a fixed rational $\varepsilon > 0$, that $\ccc=1+\varepsilon$, and that 
$\cccc$ is  a rational number  such that $\ccc > \cccc \geq \sqrt{\ccc}\;$. Notice that of course $\cccc/\ccc<1$.
For convenience, let us copy Fact \ref{source-of-und-dwa} here:

\begin{customfac}{\ref{source-of-und-dwa}}
{\em The following problem is undecidable:}\smallskip

\noindent
Given are
 polynomials, $P_s$ and $P_b$,  such that the inequality 
 $\cccc \cdot  \coef(M,P_s)\leq  \coef(M,P_b) $ holds  for each monomial $M$.
\noindent
Is the inequality  $\ccc \cdot (1+P_s(\Xi)) \leq  1+ P_b(\Xi)$ satisfied for every  valuation $\Xi\;$?
\end{customfac}

\noindent
We will prove Fact \ref{source-of-und-dwa} by showing a reduction from the standard variant of Hilbert's 10th Problem:

\begin{Fact}
    \label{fact:source-for-fact}
{\em The following problem is undecidable:}\\
Given are polynomials, $P^0_s$ and $P^0_b$. Does $\ P^0_s(\Xi) \leq P^0_b(\Xi)$ hold for every  valuation?
\end{Fact}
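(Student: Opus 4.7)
The plan is to reduce from Fact~\ref{fact:source-for-fact} (the standard inequality version of Hilbert's 10th problem) to the promise problem of Fact~\ref{source-of-und-dwa}. Write the rational $\ccc=a/b$ in lowest terms, with integers $a>b\geq 1$; recall that $\sqrt\ccc<\cccc<\ccc$, so in particular $a-\cccc b>0$. Given an instance $P^0_s,P^0_b$ of the source problem, I pick a positive integer
\[ K \;\geq\; \frac{\cccc\, b}{a-\cccc b}, \]
and output the polynomials
\[ P_s \;=\; b(1+K)\,P^0_s \;+\; (b-1), \qquad P_b \;=\; a\,P^0_b \;+\; aK\,P^0_s \;+\; (a-1), \]
where by ``$(b-1)$'' (respectively ``$(a-1)$'') I mean $b-1$ (respectively $a-1$) copies of the constant monomial~$1$. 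Both objects are sums of monomials with natural coefficients, as required.

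The key algebraic identities are
\[ 1+P_s \;=\; b\bigl(1+(1+K)P^0_s\bigr), \qquad 1+P_b \;=\; a\bigl(1+P^0_b+KP^0_s\bigr). \]
Multiplying the first by $\ccc=a/b$ yields $\ccc(1+P_s)=a\bigl(1+P^0_s+KP^0_s\bigr)$. Hence the inequality $\ccc(1+P_s(\Xi))\leq 1+P_b(\Xi)$ is equivalent, after cancelling the positive factor $a$ and subtracting $KP^0_s(\Xi)$ from both sides, to $P^0_s(\Xi)\leq P^0_b(\Xi)$. This gives the equivalence of the two universal statements ``for every valuation $\Xi$'' and is the ``clean algebra'' reason for adding the padding $KP^0_s$ to $P_s$ with multiplier $b$ and to $P_b$ with multiplier $a$: this padding cancels on both sides precisely because $\ccc b=a$.

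What remains is to check the coefficient promise $\cccc\,\coef(M,P_s)\leq\coef(M,P_b)$ for every monomial $M$. Write $s_M=\coef(M,P^0_s)$ and $b_M=\coef(M,P^0_b)$. For a monomial $M$ of positive degree, $\coef(M,P_s)=b(1+K)s_M$ and $\coef(M,P_b)=ab_M+aKs_M$, so the promise becomes $s_M\bigl[\cccc b+(\cccc b-a)K\bigr]\leq a\,b_M$; by the choice of $K$ the bracket is non-positive, and the non-negativity of $s_M$ and $b_M$ makes the inequality trivial. The constant monomial is handled analogously, contributing the extra term $\cccc(b-1)-(a-1)=(\cccc b-a)+(1-\cccc)$, which is a sum of two strictly negative numbers and hence harmless. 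A decision procedure for the target problem would thus decide Fact~\ref{fact:source-for-fact}, giving the desired undecidability. The main obstacle I had to overcome was reconciling two competing demands on the constructed pair: the equivalence of inequalities forces the rigid ``$\ccc b=a$'' scaling between $P_s$ and $P_b$, while the promise forces $P_b$ to dominate $\cccc P_s$ monomial-by-monomial; the shared padding term $KP^0_s$ is the trick that resolves the tension by inflating $\coef(M,P_b)$ enough for every $M$ that actually appears in $P_s$.
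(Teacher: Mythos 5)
Read literally, the statement you were asked to prove is Fact~\ref{fact:source-for-fact} itself --- the undecidability of deciding whether $P^0_s(\Xi)\leq P^0_b(\Xi)$ holds for every valuation --- and your argument does not prove it: you take it as the \emph{source} of your reduction and derive from it the undecidability of the promise problem of Fact~\ref{source-of-und-dwa}. Relative to the quoted statement this is the wrong direction (indeed circular: the undecidability you are supposed to establish is exactly your hypothesis). The paper does not prove Fact~\ref{fact:source-for-fact} either; it invokes it as the standard inequality form of Hilbert's 10th Problem. If a proof is wanted, it is the short reduction from H10/MRDP: an equation $P=Q$ (both sides sums of monomials) has a solution over $\mathbb{N}$ iff it is \emph{not} the case that $(P(\Xi)-Q(\Xi))^2\geq 1$ for every valuation, i.e.\ iff it is not the case that $1+2P(\Xi)Q(\Xi)\leq P(\Xi)^2+Q(\Xi)^2$ for every valuation; so putting $P^0_s=1+2PQ$ and $P^0_b=P^2+Q^2$ (expanded into sums of monomials), a decision procedure for the stated problem would solve Hilbert's 10th Problem.

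That said, what your write-up actually establishes is Fact~\ref{source-of-und-dwa}, and as such it is correct and essentially the argument of Appendix~\ref{app:B}: pad both polynomials with monomials of $P^0_s$, scale by the numerator and denominator of $\ccc$, and adjust constant terms to absorb the ``$1+{}$'' shift. Your identities $1+P_s=b\bigl(1+(1+K)P^0_s\bigr)$ and $1+P_b=a\bigl(1+P^0_b+KP^0_s\bigr)$, the cancellation of $KP^0_s$ after multiplying by $\ccc=a/b$, and the promise check (including the constant monomial, using $\cccc b<a$ and $\cccc>1$) are all sound. Your version is in fact slightly tidier than the paper's: your padding constant $K\geq \cccc b/(a-\cccc b)$ depends only on $\ccc$ and $\cccc$ rather than on the input polynomials, monomials occurring only in $P^0_b$ are handled automatically, and you avoid the final ``subtract the constant monomial'' step of the paper, which tacitly assumes a constant monomial is present. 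But as an answer to the question asked --- a proof of Fact~\ref{fact:source-for-fact} --- the genuine gap is that the undecidability of that source problem is never established.
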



%
Let $P_s^0$ and $P_b^0$ be two polynomials from \cref{fact:source-for-fact}. Clearly, given any two natural numbers $x$ and $y$ one can find a natural number $z$ such that: $$\frac{\cccc}{\ccc} \leq \frac{x + z}{y + z}.$$
Moreover, if the above holds for $z$ then it holds for any number $z'$ greater than $z$:\\
    if $x < y$  then we always have $\frac{x + z}{y + z} < \frac{x + z'}{y + z'}$;
    if $x \geq y$ then we have $\frac{x + z'}{y + z'} > 1$ and $\frac{\cccc}{\ccc} < 1$.\\
    
\newcommand{\zz}{\uuu}
\noindent
Let $\zz$ be any natural number such that for any monomial $M$ we have $$\frac{\cccc}{\ccc} \leq \frac{\coef(M, P_b^0) + \zz}{\coef(M,P_s^0) + \zz}$$

\noindent
By $\mathcal{M}_s$ denote all the monomials that occur in $P^0_s$.  Let:

\begin{align*}
P_b^1 \;=\; P_b^0 + \sum_{M \in \mathcal{M}_s} \zz \cdot M \quad\quad\quad
P_s^1 \;=\; P_s^0 + \sum_{M \in \mathcal{M}_s} \zz \cdot M
\end{align*}

\noindent
Clearly for any valuation $\Xi$ we have
$P_s^0(\Xi) \leq P_b^0(\Xi)$ if and only if 
$P_s^1(\Xi) \leq P_b^1(\Xi)$.

\noindent
Recall that $\ccc$ is a  rational. Let  $\ccc_N,\ccc_D\in \mathbb N $ be such that 
$\ccc = \ccc_N/\ccc_D$. Define:

$$P_b^2 = P_b^1 \cdot \ccc_D \quad\quad\quad P_s^2 = P_s^2 \cdot \ccc_N$$

\noindent
We have that:
\begin{align*}
\frac{\cccc}{\ccc} \;\;&\leq\;\; \frac{\coef(M, P_b^1)}{\coef(M,P_s^1)}\\
\frac{\cccc \cdot \ccc_D}{\ccc \cdot \ccc_N} \;\;&\leq\;\; \frac{\ccc_D \cdot \coef(M, P_b^1)}{\ccc_N\cdot\coef(M,P_s^1)}\\
     \cccc \;\;&\leq\;\; \frac{\coef(M, P_b^2)}{\coef(M,P_s^2)}\\
    \cccc \cdot \coef(M, P_s^2) \;\;&\leq\;\; \coef(M, P_b^2)
\end{align*}

Again, for any valuation $\Xi$ we have
$P_s^1(\Xi) \leq P_b^1(\Xi)$ if and only if 
$\ccc \cdot P_s^2(\Xi) \leq P_b^2(\Xi)$.\smallskip

Finally note that if $\cccc \cdot \coef(M, P_s^2) \leq \coef(M, P_b^2)$ then $\cccc \cdot (\coef(M, P_s^2) -1) \leq \coef(M, P_b^2) - 1$ as $\cccc > 1$ and let $$P_b = P_b^2 - 1 \quad\quad\quad P_s^2 - 1 = P_s.$$ 
Clearly for any valuation $\Xi$ we have
$\ccc\cdot(P_s^2(\Xi)) \leq P^2_b(\Xi)$ \iffi 
$\ccc\cdot(P_s(\Xi) + 1) \leq P_b(\Xi) + 1$. \qed

\newpage
\section{Proof of \cref{lem:wycieczki}, for the case of $\trip{\kkk}{}\setminus\trip{\kkk}{\leq 2}$}\label{app:wycieczki}
\newcommand{\pbv}{\Planetsbezv}
\newcommand{\pbvs}{|\Planetsbezv|}
\newcommand{\sumpbv}{\sum_{p\in\pbv}}
\newcommand{\sumjj}{\sum_{j=1}^{\jjj}}
\newcommand{\sumtj}{\sum_{\phantom{am}h\in\trip{\jjj}{1}\phantom{am}}}
\newcommand{\sumbd}{\sum_{\phantom{a}h\in\trip{\jjj}{}\setminus \trip{\jjj}{2\leq}}\phantom{a}}
\newcommand{\sumbdsS}{\sum\limits_{h\in\trip{\kkk}{}\setminus \trip{\kkk}{2\leq}}}
\newcommand{\sumbdsB}{\sum\limits_{h\in\trip{\mmm}{}\setminus \trip{\mmm}{2\leq}}}
\newcommand{\sumbdsJ}{\sum\limits_{h\in\trip{\jjj}{}\setminus \trip{\jjj}{2\leq}}}

Recall that by $\pbv$ we mean $\pbv(\DDD)$ for a very good structure $\DDD$.

\begin{observation}\label{obs:app-wycieczki}
For every polynomial $P$ with $\jjj$ monomials:
    $$\pbvs - 1 \;+\; \sumbd \nawn{\cqize (\ucqize{P})[h] \two \DDD} \;\;=\;\; \sumpbv(1 + P(\Xi^p)) $$
\end{observation}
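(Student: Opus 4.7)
The plan is to split the sum $\sumbd$ into its two components: the trivial trip $\bar\venus$ and the trips in $\trip{\jjj}{1}$, which by definition are precisely the trips $\bar p^j$ for $p \in \pbv$ and $1 \leq j \leq \jjj$. Since $\DDD$ is assumed very good, it is $\venus$-foggy, and this will be exploited repeatedly.

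First I would handle the contribution of $\bar\venus$. By Lemma \ref{lem:pi-rachunek-maly}, $\cqize(\ucqize{P})[\bar\venus] \two \DDD$ equals the product $\prod_{j=1}^\jjj \nawn{\mu{M_j} \two seen(\venus, \DDD)}$. Each $\mu{M_j}$ is pleasant (its atoms are unary of the form $X_n(\wildcard)$, or there are no atoms at all when $M_j = 1$, in which case pleasantness holds vacuously), so by Lemmas \ref{lem:widzenie} and \ref{lem:czarna-dziura} each factor equals $1$. Hence the $\bar\venus$ summand contributes $1$.

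Next I would address the trips $\bar p^j \in \trip{\jjj}{1}$. Again by Lemma \ref{lem:pi-rachunek-maly}, the value $\cqize(\ucqize{P})[\bar p^j] \two \DDD$ factors as $\nawn{\mu{M_j} \two seen(p, \DDD)} \cdot \prod_{i \neq j} \nawn{\mu{M_i} \two seen(\venus, \DDD)}$. The $\jjj-1$ factors corresponding to the aliens who stayed on Venus each collapse to $1$ by $\venus$-foggyness as above. The remaining factor is exactly $M_j(\Xi^p)$ by Lemma \ref{lem:na-mon-ok-local}. Summing over all $p \in \pbv$ and all $j$, and recalling the definition $P = \sum_{j=1}^{\jjj} M_j$, the total contribution of $\trip{\jjj}{1}$ is $\sum_{p \in \pbv} P(\Xi^p)$.

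Combining the two contributions, $\sumbd \nawn{\cqize(\ucqize{P})[h] \two \DDD} = 1 + \sum_{p \in \pbv} P(\Xi^p)$. Adding $\pbvs - 1$ to both sides gives $\pbvs + \sum_{p \in \pbv} P(\Xi^p) = \sumpbv(1 + P(\Xi^p))$, which is the desired identity. There is no real obstacle here; the observation is essentially a bookkeeping consequence of Lemmas \ref{lem:pi-rachunek-maly}, \ref{lem:czarna-dziura}, and \ref{lem:na-mon-ok-local}, with the $-1$ on the LHS exactly absorbing the ``lonely'' contribution of the $\bar\venus$ trip so that the remaining $\pbvs$ pairs nicely with the constant $1$ inside the RHS sum.
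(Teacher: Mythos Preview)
Your proposal is correct and follows essentially the same approach as the paper: both split $\trip{\jjj}{}\setminus\trip{\jjj}{2\leq}$ into $\{\bar\venus\}$ and $\trip{\jjj}{1}$, use $\venus$-foggyness (via Lemmas \ref{lem:pi-rachunek-maly} and \ref{lem:czarna-dziura}) to collapse the Venus factors to $1$, invoke Lemma \ref{lem:na-mon-ok-local} to identify the surviving factor with $M_j(\Xi^p)$, and then do the final bookkeeping with the $|\Planetsbezv|-1$ term. The only cosmetic difference is that the paper moves the $-1$ to the left at the outset, whereas you carry the $+1$ from $\bar\venus$ through and cancel it at the end.
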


\begin{proof}

\begin{alignat*}{3}
    (-1) \;+\!\!\!\!\! \sumbd\!\!\!\! \nawn{\cqize(\ucqize{P})[h] \two \DDD} \;\;=& \sumtj &&\;\;\nawn{\cqize(\ucqize{P})[h] &&\two \DDD}\tag{1}\\
    =& \sumpbv \sumjj &&\;\;\nawn{\cqize(\ucqize{P})[\bar p^j] &&\two \DDD}\tag{Definition of $\trip{\jjj}{1}$}\\
    =& \sumpbv \sumjj &&\;\;\nawn{(p \cansee \mu{M_j}) &&\two \DDD}\tag{2}\\
    =& \sumpbv \sumjj  &&\;\;M_j(\Xi^p)\tag{3}\\
    =& \sumpbv &&\;\;P(\Xi^p)\tag{Split into monomials}\\
    =& \sumpbv &&\;\;P(\Xi^p)
\end{alignat*}

\begin{itemize}
    \item [(1)] We have that $\trip{\jjj}{1} \cup \singleton{\bar \venus} = \trip{\jjj}{} \setminus \trip{\jjj}{2 \leq}$. Note that $\cqize(\ucqize{P})[\bar \venus] \two \DDD \;=\; 1$ as $\venus$ is foggy (\cref{lem:pi-rachunek-maly,lem:czarna-dziura}).\qedhere
    \item [(2)] Note, by the definition of CQ-ization and $\mu{}$, we have that 
    $$\cqize(\ucqize{P}) \;\;=\;\; \galaxy_\jjj(x_1,x_2,\ldots x_\jjj)\wedge \bigwedge_{ j=1}^{\jjj} (x_j \cansee \mu{M_j}),$$
    where $M_j$ is the $j$-th monomial of $P$. By definition of $\bar p^j$ we have 
    $$\cqize(\ucqize{P})[\bar p^j] \;\;=\;\; \galaxy_\jjj(\underbrace{\venus,\, \venus,\,\ldots \venus}_{j-1},\,  p,\, \ldots  \venus)\wedge (p \cansee \mu{M_j}) \wedge \bigwedge_{\substack{i=1\\i\neq j}}^{\jjj} (\venus \cansee \mu{M_j}).$$ 
    As $p$ is a planet we know that $\galaxy(\venus,\,\ldots,\,p\ldots)$ holds in $\DDD$, and as $\DDD$ is $\venus$-foggy we know (\cref{lem:czarna-dziura}) that any CQ of the form $\venus \cansee \phi$ maps to $\DDD$ only via a single homomorphism - one that maps every variable of $\phi$ into $\venus$. From this, we know that:
    $$\cqize(\ucqize{P})[\bar p^j] \;\two\; \DDD \;\;=\;\; (p \cansee \mu{M_j}) \;\two\; \DDD.$$
    \item [(3)] From \cref{lem:na-mon-ok-local} we have $\;M(\Xi^p)=  (p \cansee \mu{M} ) \two \DDD$, for any monomial $M$ of $P$ and planet $p$ of $\DDD$.
\end{itemize}
\end{proof}
\newcommand{\XXS}{\mathbf{X}_{s}}
\newcommand{\XXB}{\mathbf{X}_{b}}

Finally:
\begin{align*}
    \sumpbv\ccc \cdot (1+P_s(\Xi^p)) \;\;&\leq\;\; \sumpbv(1+P_b(\Xi^p))\tag{1}\\
    \ccc \cdot \sumpbv (1+P_s(\Xi^p)) \;\;&\leq\;\; \sumpbv(1+P_b(\Xi^p))\tag{2}\\
    \ccc\cdot\left(\pbvs - 1 + \XXS\right) \;\;&\leq\;\; \pbvs - 1 + \XXB\tag{3}\\
    \ccc\cdot\left(\pbvs - 1\right) + \ccc\cdot\XXS \;\;&\leq\;\; \pbvs - 1 + \XXB\\
     \ccc\cdot\XXS \;\;&\leq\;\; \XXB\tag{4}\\
     \ccc \;\cdot\; \sumbdsS \nawn{\cqize(\gammas)[h] \two \DDD} \;\;&\leq\;\;
 \sumbdsB \nawn{\cqize(\gammab)[h] \two \DDD}\tag{5} 
\end{align*}
\begin{itemize}
    \item [(1)] By Assumption ($\clubsuit$), for each valuation $\Xi$ we have $\ccc \cdot (1+P_s(\Xi)) \;\leq\; 1+P_b(\Xi)$. The initial inequality is obtained by summing over $\pbv$.
    \item [(2)] $\ccc$ is independent from $\pbv$.
    \item [(3)] Simply apply \cref{obs:app-wycieczki}, and let for clarity:
    $$X_b \;=\; \sumbdsB \nawn{\cqize(\ucqize{P_b})[h] \two \DDD} \quad\quad\quad X_s \;=\; \sumbdsS \nawn{\cqize(\ucqize{P_s})[h] \two \DDD}$$
    
    \item [(4)] We took $\ccc \cdot (\pbvs - 1)$ from the left side and $\pbvs - 1$ from the right side. Therefore the inequality holds if $\pbvs > 0$. Note that $\mars$ is a planet in $\DDD$ and as $\DDD$ is non-trivial we have $\mars \neq \venus$. Thus $\mars \in \pbvs$. 
    \item [(5)] Definitions of $X_b$ and $X_s$ \qed
\end{itemize}
\newpage
\section{Proof of Lemma \ref{lem:rowne-rrr}}\label{app:D}

    We will show only the first equality as the other one is proven analogously. 

\begin{alignat*}{2}
    \cqize(\gammas)[h_s] \two \DDD \quad=&\quad \left(\galaxy_\kkk(h_s(x_1),\ldots,h_s(x_{\kkk}))\land \bigwedge_{k\in \{1,2,\ldots \kkk\}} h_s(x_i) \cansee \mu{M^s_k}\right) &&\;\two\; \DDD \tag{1}\\
    \quad=&\quad \bigwedge_{k \in \{1,2,\ldots \kkk\}} h_s(x_k) \cansee \mu{M^s_k} &&\;\two\; \DDD\tag{2}\\
    \quad=&\quad \bigwedge_{p \in A}p \cansee \mu{\tau(p)} \wedge \bigwedge_{k \in \{1,2,\ldots \kkk\} \setminus \bar h_s(A)} h_s(x_k) \cansee \mu{M^s_k} &&\;\two\; \DDD\tag{3}\\
    \quad=&\quad \bigwedge_{p \in A}p \cansee \mu{\tau(p)} \wedge \bigwedge_{k\in \{1,2,\ldots \kkk\} \setminus \bar h_s(A)} \venus \cansee \mu{M^s_k} &&\;\two\; \DDD\tag{4}\\
    \quad=&\quad \bigwedge_{p \in A}p \cansee \mu{\tau(p)}  &&\;\two\; \DDD\tag{5}\\
    \quad=&\quad \prod_{p \in A}p \cansee \mu{\tau(p)}  &&\;\two\; \DDD\tag{6}
\end{alignat*}
\begin{enumerate}
    \item From \cref{def:cqizacja,def:straszne-emy-dwa,def:poly-to-cq}.
    \item Since $h_s$ is a trip, $h_s(x_1),\ldots,h_s(x_{\kkk})$ are vertices 
    of $\DDD$, and there are no variables in the query $\galaxy_\kkk(h_s(x_1),\ldots,h_s(x_{\kkk}))$.
    And, again since $h_s$ is a trip $\DDD\models \galaxy_\kkk(h_s(x_1),\ldots,h_s(x_{\kkk}))$, so 
    $\galaxy_\kkk(h_s(x_1),\ldots,h_s(x_{\kkk})) \two \DDD=1$.

    \item Trip $h_s$ is a bijection when restricted to the preimage of $A$. The second 
    of the two large conjunctions ranges over indices of variables which are not mapped to $A$ via $h_s$. 
    \item Since $h_s \in \trip{\kkk}{A}$, we have that $h_s(x) = \venus$ if $h_s(x) \not\in A$.
    \item Since $\DDD$ is $\venus$-foggy, all the variables of $ \venus \cansee \mu{M^s_k}$ must be mapped to $\venus$. It is possible because $\DDD$ is good.
    \item From \cref{obs:12}. \qedhere
\end{enumerate} 
\qed
\newpage
\section{Proof of 
Lemma \ref{lem:kombinatoryczny}}\label{app:E}

    For an $M\in \mathcal M$ let $A_M$ be the set $\set{p \in A \mid \tau(p) = M}$. 
High school combinatorics tells us that:

\vspace{2mm}
    
\begin{observation}\label{obs:app:E}

    \begin{align*}
    \ttt_s \;=\; \prod_{M \in \mathrm{Im}(\tau)}\frac{\coef(M, P_s)!}{(\coef(M, P_s) - |A_M|)!} 
   \;\;\;\; \text{and} \;\;\;\;
    \ttt_b \;=\; \prod_{M \in \mathrm{Im}(\tau)}\frac{\coef(M, P_b)!}{(\coef(M, P_b) - |A_M|)!} 
    \end{align*}
\end{observation}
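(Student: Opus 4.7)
My plan is to interpret both counts as numbers of partial injective assignments of ``alien indices'' to planets. Fix a trip $h \in \trip{\kkk}{A}$ with $\hat h = \tau$; it is completely determined by its induced map $\bar h : A \to \{1,\dots,\kkk\}$, which satisfies (i) $M^s_{\bar h(p)} = \tau(p)$ for every $p \in A$, and (ii) $\bar h$ is injective. The only non-routine point is (ii): it holds because $\DDD$ is very good, so $\venus$ is the only vertex on which $R$ is reflexive, and hence two distinct aliens cannot land on the same non-$\venus$ planet inside the $R$-clique demanded by $\galaxy_\kkk$. Conversely, any $\bar h$ obeying (i) and (ii) extends uniquely to an element of $\trip{\kkk}{A}$ by sending all remaining aliens to $\venus$, so the counting problem for $\ttt_s$ reduces to counting such $\bar h$.

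Having reformulated things in this way, I would then split the count across $\mathrm{Im}(\tau)$. Both $A = \bigsqcup_{M \in \mathrm{Im}(\tau)} A_M$ and the set of ``relevant'' alien indices $\bigsqcup_{M \in \mathrm{Im}(\tau)} \{k : M^s_k = M\}$ (of block-size $\coef(M, P_s)$) decompose into disjoint monomial-indexed blocks. Condition (i) forces $\bar h$ to send each $A_M$ into the $M$-block, so choosing $\bar h$ breaks into independent choices, one per $M \in \mathrm{Im}(\tau)$, each of which is an injection from a set of size $|A_M|$ into a set of size $\coef(M, P_s)$.

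Since the number of such injections is $\coef(M,P_s)!/(\coef(M,P_s)-|A_M|)!$ (interpreted as $0$ when $\coef(M,P_s) < |A_M|$, which is precisely when no admissible $\bar h$ exists), the product rule over $M \in \mathrm{Im}(\tau)$ yields the stated formula for $\ttt_s$. The argument for $\ttt_b$ is verbatim, with $P_b$, $\mmm$, and $\trip{\mmm}{A}$ replacing $P_s$, $\kkk$, and $\trip{\kkk}{A}$. I do not foresee a real obstacle here; the lone subtlety is the injectivity claim in (ii), and that is just a one-line consequence of the ``very good'' hypothesis on $\DDD$ together with the $\galaxy_\kkk$ constraint built into every trip.
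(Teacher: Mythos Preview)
Your proposal is correct and matches the paper's approach exactly; the paper itself simply says ``High school combinatorics tells us that'' without further justification, so you have in fact supplied strictly more detail than the original. One tiny cosmetic remark: the justification you give under (ii)---that two distinct aliens cannot land on the same non-$\venus$ planet---is really the argument for \emph{well-definedness} of $\bar h$ (at most one $j$ with $h(x_j)=p$); injectivity of $\bar h$ then follows for free since $\bar h$ is the partial inverse of $h$. This does not affect the correctness of your count.
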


\vspace{5mm}

\noindent
Using the Observation we have.

\begin{alignat*}{2}
    \ttt_b 
    &= &&\prod_{M \in \mathrm{Im}(\tau)}\frac{\coef(M, P_b)!}{(\coef(M, P_b) - |A_M|)!}\tag{\cref{obs:app:E}}\\
    &=&&\prod_{M \in \mathrm{Im}(\tau)}
    \coef(M, P_b) \cdot (\coef(M, P_b) - 1) \cdot \ldots \cdot (\coef(M, P_b) - |A_M| + 1)\\
    &\geq\;&&\prod_{M \in \mathrm{Im}(\tau)}
    \cccc\;\cdot\;\coef(M, P_s) \cdot (\cccc\cdot\coef(M, P_s) - 1) \cdot \ldots \cdot (\cccc\cdot\coef(M, P_s) - |A_M| + 1)\tag{1}\\
    &\geq&&\prod_{M \in \mathrm{Im}(\tau)}
    \cccc^{|A_M|} \;\cdot\; \coef(M, P_s) \cdot (\coef(M, P_s) - 1) \cdot \ldots \cdot (\coef(M, P_s) - |A_M| + 1)\tag{2}\\
    &= \cccc^{|A|}  &\;\cdot\;&\prod_{M \in \mathrm{Im}(\tau)}\frac{\coef(M, P_s)!}{(\coef(M, P_s) - |A_M|)!}\tag{3}\\
    &\geq \cccc^{2}  &\;\cdot\;&\prod_{M \in \mathrm{Im}(\tau)}\frac{\coef(M, P_s)!}{(\coef(M, P_s) - |A_M|)!}\tag{$|A| \geq 2$}\\
    &\geq \ccc  &\;\cdot\;&\prod_{M \in \mathrm{Im}(\tau)}\frac{\coef(M, P_s)!}{(\coef(M, P_s) - |A_M|)!}\tag{4}\\
    &= \ccc  &\;\cdot\;& \ttt_s\tag{\cref{obs:app:E}}
\end{alignat*}

\vspace{4mm}
\begin{itemize}
    \item [(1)] From the assumption $\cccc \cdot  \coef(M,P_s)\leq  \coef(M,P_b) $ for each monomial $M$
    \item [(2)] Note, $\cccc \cdot a - i > \cccc \cdot (a - i)$ for $a \geq 0$ and any natural $i$. Also, $\coef(M, P_s) \geq 0$ for each monomial $M$.
    \item [(3)] The constant $\cccc$ is independent from $\tau$. Also, $\sum_{M \in \mathrm{Im}(\tau)}|A_M| = |A|$, by the definition of $A_M$.
    \item [(4)] From the assumption $\cccc \geq \sqrt{\ccc}$
\end{itemize}

\vspace{2mm}
This ends the proof of Lemma  \ref{lem:kombinatoryczny}. 

\end{document}
